\theoremstyle{theorem}
\newtheorem{lemma}{Lemma}
\newtheorem{theorem}[lemma]{Theorem}
\newtheorem{exam}[lemma]{Example}
\newenvironment{example}{\begin{exam}\rm}{\end{exam}}
\newtheorem{defi}[lemma]{Definition}
\newenvironment{definition}{\begin{defi}\rm}{\end{defi}}
\theoremstyle{definition}
\newtheorem*{remark}{Remark}
\newcommand\mut[1]{}
\newcommand\rank{\mbox{rank}}
\newcommand\alert[1]{{\it #1}}
\begin{document}

\title{A Decoding Approach to Reed--Solomon Codes \\from Their Definition}
\markright{A Decoding Approach to Reed-Solomon Codes}
\author{Maria Bras-Amor\'os\thanks{M. Bras-Amor\'os is with Universitat Rovira i Virgili, Tarragona, Catalonia  (e-mail: maria.bras@urv.cat)}}

\maketitle

\begin{abstract}
  Because of their importance in applications and their quite simple definition, Reed--Solomon codes can be explained in any introductory course on coding theory. However, decoding algorithms for Reed--Solomon codes are far from being simple and it is difficult to fit them in introductory courses for undergraduates. We introduce a new decoding approach, in a self-contained presentation, which we think may be appropriate for introducing error correction of Reed--Solomon codes to nonexperts. In particular, we interpret Reed--Solomon codes by means of the degree of the interpolation polynomial of the code words and from this derive a decoding algorithm. Compared to the classical algorithms, our algorithm appears to arise more naturally from definitions and to be easier to understand. It is related to the Peterson--Gorenstein--Zierler algorithm (see \cite{GorensteinZierler} and \cite{Peterson}).
  
\end{abstract}

\section{Introduction.}
Error control codes are used to detect and correct errors that may occur in data transmission or storage through eventually defective channels or storage devices that can distort the sent or stored information. For example, the atmosphere introduces errors in the transmission of images from the Meteosat satellite to Earth, different interferences in communications by mobile phone may cause transmission errors, or reading devices need correcting algorithms for handling CDs, DVDs, or USB memories. Error control codes are also used in distributed data storage in the cloud to recover lost or damaged chunks of information.
In the words of Elwyn R. Berlekamp, one of the fathers of coding theory, \begin{quote}Communication links transmit information from here to there. Computer memories transmit information from now to then. In either case, noise causes the received data to differ slightly from the original data. As Shannon \cite{Shannon1948} showed in 1948, the noise need not cause any degradation in reliability. The noise does impose some limiting capacity on the throughput rate, although that limit is typically well above the throughput rate at which real systems operate. Error-correcting codes enable a system to achieve a high degree of reliability despite the presence of noise \cite{Berlekamp1980}.\end{quote}

The {\it modus operandi} of those codes is to send along with the original information a small amount of redundancy, so that from all the received information one can deduce what is actually transmitted. The simplest example is adding for every transmitted bit (a $0$ or a $1$) two identical copies. If the original bit or one of its copies is received with an error, we can still correct it from the other two, which we expect to coincide. Note that by adding redundancy, on one side we improve the quality of the received information. But, on the other side, we augment the transmission cost. In the example of repeating bits, the transmission cost is multiplied by three.

Coding theory aims at designing and implementing codes with good correcting capacity, while maintaining a low transmission cost, as well as designing detection and correction algorithms that allow the receiver to recover the original information.

Berlekamp's reference \cite{Berlekamp1980} gives a detailed historical review (up to 1980) of coding theory since Shannon's cornerstone contribution \cite{Shannon1948}. At that time, the so-called Reed--Solomon codes \cite{ReedSolomon} and the most relevant algorithms for decoding Reed--Solomon codes had already appeared. They are the most universal error control codes and are currently being used directly or indirectly in most transmission devices and storage systems. Reed--Solomon codes admit different definitions as will be explained in this article, and they are all based in polynomials of bounded degrees over a finite field. One way to explain how they work is as follows. Fix a finite field of cardinality $q$. From the data one wants to transmit (say $k$ elements of ${\mathbb F}_q$), one interpolates a polynomial of degree less than $k$ that takes these values when evaluated at $k$ given nonzero elements of the finite field. Then one adds to the original $k$ information values the redundancy which consists of the evaluation of the polynomial at the remaining $q-1-k$ nonzero values of the finite field. Basic polynomial theory shows how any small part of the whole $(q-1)$-length of the received information can be restored from the rest.

Because of their importance in applications and their quite simple definition, Reed--Solomon codes can be explained in any introductory course. However, decoding algorithms for Reed--Solomon codes are far from being so simple and it is difficult to explain them in introductory courses for undergraduates. This is why we introduce our new decoding approach, in a self-contained presentation, which we think may be appropriate to introduce error correction of Reed--Solomon codes to nonexperts. Although a direct implementation of the algorithm presented in this article may not be as efficient as the most efficient known algorithms, we think that it is performable by any undergraduate student using basic software tools. However, we do not rule out the possibility that technical improvements to the algorithm may make it much more efficient, especially if one can deemphasize matrices in favor of polynomials.

The most celebrated algorithms for decoding Reed--Solomon codes have been the Peterson--Gorenstein--Zierler algorithm \cite{Peterson,GorensteinZierler} for its simplicity, and the algorithms designed to solve Berlekamp's key equation \cite{Berlekamp:Book}. The two primary decoding algorithms that solve Berlekamp's key equation are the Berlekamp-Massey algorithm \cite{Massey:Shift} and the Sugiyama et al. adaptation of the Euclidean algorithm \cite{Sugiyama:key}. The alternative so-called Welch--Berlekamp equations are solved in the Welch--Berlekamp algorithm \cite{WelchBerlekamp}. Bit-serialized multiplication and bit-serial encoders are more efficient for hardware implementation of shift registers \cite{Berlekamp1982}. This is used in the algorithm in \cite{Berlekamp1996}. The Welch--Berlekamp equations were also solved by Chambers' algorithm \cite{Chambers} and by Fedorenko's algorithm \cite{Fedorenko}.
Another general perspective is that of decoding pairs \cite{Pellikaan92,Pellikaan96}.
Guruswami and Sudan presented their breakout algorithm \cite{Sudan,GuruswamiSudan} decoding beyond half the minimum distance by means of list decoding.
All these algorithms and their relationships are analyzed in several papers such as \cite{Dornstetter,HeyJensen:BMandE,Meteer,BossertBezzateev,MateerRS}.

In Section~\ref{sec:rs} we revisit the definition of Reed--Solomon codes, giving four different, but equivalent, versions. Reinterpreting a definition related to the degree of the interpolation polynomial, we derive a decoding algorithm. The key result for the new formulation is Theorem~\ref{t:lambda} in Section~\ref{sec:decap}. Now, for correcting a received word, its interpolation polynomial is split into two parts, one with low order terms (lower than the code dimension) and the other one with the remaining terms. The latter part is fixed while the first part is modified in order to maximize the number of nonzero roots. This gives the code word at minimum distance from the received word.

Our decoding algorithm is related to the Peterson--Gorenstein--Zierler algorithm. We compare both algorithms in Sections \ref{sec:comp} and \ref{sec:opt} and see how our algorithm is well suited for the optimistic view of best case decoding \cite{Berlekamp1996}. This is the case when error correction codes of high correction capability are used, but with a low expectation of errors.

\section{Some background on coding theory.}
Let us start with some background definitions and results on coding theory.
Standard references are \cite{MacWilliamsSloane,Roman,JustesenHoholdt,Bierbrauer,Roth}.

\paragraph{The alphabet ${\mathbb F}_q$.}
The symbols that contain the information that needs to be sent as well as the symbols corresponding to the transformed and transmitted data are the elements of a finite field, which is often called the transmission {\it alphabet}. One can consider the case in which ${\mathbb F}_q$ is a prime field, that is, $q$ is a prime number and ${\mathbb F}_q$ can be identified by the set $\{0,1,\dots,q-1\}$, equipped with the usual addition and product modulo $q$. There will always exist an element $\alpha$ in ${\mathbb F}_q$ such that all the powers of $\alpha$ with exponent smaller than $q-1$ are different. Then, ${\mathbb F}_q=\{0,1,\alpha,\alpha^2,\dots,\alpha^{q-2}\}$. In this case, $\alpha$ is called a {\it primitive element}.

\begin{example}
\label{e:fset}
Consider ${\mathbb F}_7$. It is the set $\{0,1,2,3,4,5,6\}$ equipped with the addition and multiplication operations, always modulo $7$. For instance, in ${\mathbb F}_7$, $4+5=2$, $1-2=6$, $3\cdot 5=5$.
It is easy to verify that $\alpha=5$ is a primitive element of ${\mathbb F}_7$.
\end{example}

\paragraph{Linear codes.}
A  \alert{linear code} $C$ of \alert{length} $n$ over a finite field ${\mathbb F}_q$ is a vector subspace of ${\mathbb F}_q^n$. Its vectors are called \alert{code words}.
The \alert{dimension} $k$ of the code is the dimension of the subspace. In particular, the number of code words of $C$ is $q^k$. 

\paragraph{Generator matrices.}
We say that a matrix  $G$ of $k$ rows and $n$ columns is a  \alert{generator matrix} of $C$ if its rows are a set of vectors generating the code. The generator matrix is not unique, for example we can permute the rows. To encode a word of  $k$ symbols of  ${\mathbb F}_q$, we multiply it by the generator matrix.
\begin{example}
\label{e:Gf}
The following matrix is the generator matrix of a code $C$ of length $6$ and dimension $2$ over ${\mathbb F}_7$.
$$G_{ex}=\left(
  \begin{array}{cccccc}
    1 & 1 & 1 & 1 & 1 & 1 \\
    1 & 5 & 4 & 6 & 2 & 3 \\
    \end{array}
  \right).
  $$
  To encode the information $110256$  we split it into blocks of $k=2$ symbols and multiply each block by $G_{ex}$.
$$\left(\begin{array}{cc}
    1& 1
  \end{array}\right)
  G_{ex}
  =\left(\begin{array}{ccccccc}
    2& 6& 5& 0& 3& 4\\ 
    \end{array}\right),$$
$$\left(\begin{array}{cc}
  0& 2
\end{array}\right)
G_{ex}
    =\left(\begin{array}{ccccccc}
 2& 3& 1& 5& 4& 6\\
    \end{array}\right),$$
$$\left(\begin{array}{cc}
  5&6
    \end{array}\right) G_{ex}
      =\left(\begin{array}{ccccccc}
4& 0& 1& 6& 3& 2\\
    \end{array}\right).$$
      The encoded information will then be $265034231546401632$.  
\end{example}

\paragraph{Dual code and parity-check matrices.}
Consider the scalar product of two vectors $(u_0,u_1,\dots,u_{n-1})$ and $(v_0,v_1,\dots,v_{n-1})$ of ${\mathbb F}_q^n$, defined as $u_0 v_0+u_1 v_1+\dots+u_{n-1} v_{n-1}\in{\mathbb F}_q$. The \alert{dual code} of $C$ is  $C^\perp=\{v\in{\mathbb F}_q^n:v\cdot c=0 \mbox{ for all }c\in C\}$.
It is a linear code with the same length as  $C$ and dimension $n-k$. We can define it from a system of linear equations with coefficient matrix $G$. A matrix $H$ generating $C^\perp$ is called a \alert{parity-check-matrix} of  $C$.
Equivalently, a parity-check matrix of  $C$ is a matrix such that the code $C$ can be redefined as
$C=\{c\in{\mathbb F}_q^n:c\cdot h=0 \mbox{ for every row } h\mbox{ of }H\}$.
\begin{example}\label{e:Hf}
  The following matrix is a parity-check matrix of the code $C$ of Example~\ref{e:Gf}.    $$H_{ex}=\left(
  \begin{array}{cccccc}
    1 & 5 & 4 & 6 & 2 & 3 \\
    1 & 4 & 2 & 1 & 4 & 2 \\
    1 & 6 & 1 & 6 & 1 & 6 \\
    1 & 2 & 4 & 1 & 2 & 4 \\
  \end{array}
  \right)$$
\end{example}

\paragraph{Hamming distance, correction capability, and Singleton bound.}
The \alert{Hamming distance} between two words of the same length is the number of positions in which their symbols differ.
The purpose of decoding algorithms is, given an input vector $u$ of the same length as the code, output a code word $c\in C$ minimizing the Hamming distance between $u$ and $c$.
The \alert{weight} of a word is the number of nonzero symbols or, equivalently, its Hamming distance from the zero vector.
The \alert{minimum distance} $d$ of a linear code $C$ can be equivalently defined as (i) the minimum Hamming distance between two words of $C$; (ii) the minimum weight of nonzero words of $C$; (iii) the minimum number of linearly dependent columns of $H$.
The minimum distance of a code is an important parameter quantifying the error correction capability of the code. Indeed, if at most $\lfloor\frac{d-1}{2}\rfloor$ errors are added to a code word $c\in C$, corrupting it into a word $u$, then $c$ is the unique code word of $C$ at Hamming distance at most $\lfloor\frac{d-1}{2}\rfloor$ from $u$, and in this sense we say that $\lfloor\frac{d-1}{2}\rfloor$ errors can be corrected.

The \alert{Singleton bound} states that for a linear code of length $n$ and minimum distance $d$, the dimension $k$ satisfies $k\leq n-d+1.$ The codes attaining this bound are called \alert{maximum distance separable codes} (MDS).

\begin{example}\label{e:dmin}
The Hamming distance between the code words $265034$ and $231546$ of the code $C$ in Example~\ref{e:Gf} is $5$. The Hamming distance between the code word $111111$ corresponding to the first row of $G_{ex}$ and $265034$ is $6$. 

  Notice that the elements of the first row of the generator matrix of $C$ are all equal while the elements of the second row are all different.
  Any code word of $C$ will be a multiple of the second row plus a multiple of the first row. The components of any multiple of the second row, except for the zero vector, will all be different by field properties. Similarly, if we add to a vector whose components are all different a constant vector, then the components of the vector so obtained will also all be different by field properties.
  So, for any vector in $C$, either it is constant or all its components are different. This makes the Hamming distance between any two words in $C$ either equal to $6$ or to $6-1=5$. Consequently, the minimum distance of $C$ is~$5$.
\end{example}

\paragraph{Vandermonde matrices.}
Although Vandermonde matrices can be defined over any field, for our purposes we concentrate on finite fields.
Given $\alpha_1,\alpha_2,\dots,\alpha_n\in{\mathbb F}_q$, the Vandermonde matrix of $\alpha_1,\dots,\alpha_n$ of order $r$ is
defined as 
$$V_r(\alpha_1,\alpha_2,\dots,\alpha_n)=\left(
\begin{array}{cccc}
1 & 1 & \dots & 1\\
\alpha_1 & \alpha_2 & \dots & \alpha_n\\
\alpha_1^2 & \alpha_2^2 & \dots & \alpha_n^2\\
\vdots & \vdots & \ddots & \vdots\\
\alpha_1^{r-1} & \alpha_2^{r-1} & \dots & \alpha_n^{r-1}\\
\end{array}
\right).$$
\noindent It can be proved that the determinant of $V_n(\alpha_1,\alpha_2,\dots,\alpha_n)$ satisfies
%\break
$\left|V(\alpha_1,\alpha_2,\dots,\alpha_n)\right|=\prod_{1\leq i< j\leq n}(\alpha_j-\alpha_i).$
Consequently, $V_n(\alpha_1,\alpha_2,\dots,\alpha_n)$ has an inverse matrix if and only if $\alpha_i\neq\alpha_j$ for all $1\leq i< j\leq n$.

\section{Four definitions of Reed--Solomon codes.}
\label{sec:rs}

Let us introduce Reed--Solomon codes from four different, but complementary, points of view. 

\subsection{Reed--Solomon codes from generator matrices.}

Let ${\mathbb F}_q$ be the field with $q$ elements ($q$ a prime power)
and let $\alpha$ be a primitive element of ${\mathbb F}_q$. Then ${\mathbb F}_q=\{0,1,\alpha,\alpha^2,\dots,\alpha^{q-2}\}$. Let $n=q-1$.

\begin{definition}The Reed--Solomon code over ${\mathbb F}_q$ and of dimension $k$, $RS_{q,\alpha}(k)$, is the linear code of ${\mathbb F}_q^n$ with generator matrix 
\begin{equation*}G=\left(\begin{array}{ccccc}
1      &        1    & 1        & \dots &          1\\
1      & \alpha      & \alpha^2      & \dots & \alpha^{n-1}\\
1      & \alpha^2    & \alpha^4      & \dots & \alpha^{2(n-1)}\\
\vdots & \vdots      & \vdots        & \ddots& \vdots\\
1      & \alpha^{k-1} & \alpha^{(k-1)2} & \dots & \alpha^{(k-1)(n-1)}
\end{array}
\right).\end{equation*}
\end{definition}

\begin{example}
  \label{e:toyG}
  Consider the finite field ${\mathbb F}_7$. As noted above, the element $5\in{\mathbb F}_7$ is primitive. Indeed,   $5^0 = 1$,
    $5^1  =5$,
    $5^2  =4$,
    $5^3  =6$,
    $5^4  =2$,
    $5^5  =3$ and $5^6$ is again $1$.
  The code $RS_{7,5}(2)$ is exactly the code $C$ of Example~\ref{e:Gf}.
\end{example}

\subsection{Reed--Solomon codes from parity-check matrices.}

Consider the matrix

\begin{equation}
\label{e:HHHH}
  H=\left(\begin{array}{ccccc}
1      & \alpha      & \alpha^2      & \dots & \alpha^{n-1}\\
1      & \alpha^2    & \alpha^4      & \dots & \alpha^{2(n-1)}\\
\vdots & \vdots      & \vdots        & \ddots& \vdots\\
1      & \alpha^{n-k} & \alpha^{(n-k)2} & \dots & \alpha^{(n-k)(n-1)}
\end{array}
  \right).\end{equation}
\noindent It is a matrix of maximum rank (namely $n-k$) because of the Vandermonde structure.
Furthermore, the product of matrix $G$ and the transpose of matrix $H$ is the zero matrix. Indeed, 
the product of the $i$th row of matrix $G$ ($1\leq i\leq k$) times the $j$th row of matrix $H$ ($1\leq j\leq n-k$) is $\sum_{r=1}^{n}\alpha^{(i-1)(r-1)}\alpha^{j(r-1)}=\sum_{r=1}^{n}\alpha^{(i+j-1)(r-1)}.$ Now, because of the limits of $i$ and $j$, we have that $i+j-1<q-1$ and so $\alpha^{i+j-1}\neq 1$. Finally, the sum equals $\frac{(\alpha^{i+j-1})^n-1}{\alpha^{i+j-1}-1}=0$.

This enables us to give the following equivalent definition.

\begin{definition}\label{def:H}The Reed--Solomon code over ${\mathbb F}_q$ and of dimension $k$, $RS_{q,\alpha}(k)$, is the linear code of ${\mathbb F}_q^n$ 
with parity-check matrix equal to $H$.
\end{definition}

\begin{example}
  One can check that matrix $H_{ex}$ in Example~\ref{e:Hf} is of the form of the matrix $H$ in \eqref{e:HHHH}.
  \end{example}

Lemma \ref{l:mds} uses Definition \ref{def:H} to deduce that Reed--Solomon codes attain the Singleton bound, and so they are maximum distance separable (MDS) codes.
\begin{lemma}\label{l:mds}
The minimum distance of $RS_{q,\alpha}(k)$ is exactly $n-k+1$. Hence, it is an MDS code.
\end{lemma}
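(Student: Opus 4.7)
The plan is to use the parity-check characterization of the minimum distance together with the Singleton bound. Since the Singleton bound already gives $d \leq n-k+1$, the only thing to prove is the matching lower bound $d \geq n-k+1$. By the third equivalent definition of minimum distance recalled in the background section, this amounts to showing that every collection of $n-k$ columns of the parity-check matrix $H$ in \eqref{e:HHHH} is linearly independent.

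First I would fix an arbitrary choice of $n-k$ column indices $0 \le j_1 < j_2 < \cdots < j_{n-k} \le n-1$ and write out the resulting $(n-k)\times(n-k)$ submatrix of $H$, whose $(i,\ell)$ entry is $\alpha^{i\, j_\ell}$. Next I would pull out a common factor of $\alpha^{j_\ell}$ from each column $\ell$, so that the remaining matrix is exactly $V_{n-k}(\alpha^{j_1},\alpha^{j_2},\ldots,\alpha^{j_{n-k}})$ in the Vandermonde notation introduced earlier.

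The key observation is that since $\alpha$ is a primitive element of ${\mathbb F}_q$ and the exponents $j_1,\ldots,j_{n-k}$ lie in $\{0,1,\ldots,n-1\}=\{0,1,\ldots,q-2\}$ and are pairwise distinct, the values $\alpha^{j_1},\ldots,\alpha^{j_{n-k}}$ are pairwise distinct elements of ${\mathbb F}_q$. By the Vandermonde determinant formula stated in the background section, this determinant is nonzero, and the extracted factors $\alpha^{j_1}\cdots\alpha^{j_{n-k}}$ are also nonzero, so the submatrix is invertible. Hence no $n-k$ columns of $H$ can be linearly dependent, giving $d \geq n-k+1$.

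Combining this with the Singleton bound $d\le n-k+1$ yields $d=n-k+1$, and $RS_{q,\alpha}(k)$ is MDS. The only real step with substance is the Vandermonde reduction; there is no genuine obstacle, since everything needed has been collected in the background section.
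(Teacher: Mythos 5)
Your proof is correct and follows essentially the same route as the paper: extract the factors $\alpha^{j_\ell}$ from the columns of the $(n-k)\times(n-k)$ submatrix of $H$, reduce to a Vandermonde determinant in the pairwise distinct values $\alpha^{j_1},\dots,\alpha^{j_{n-k}}$, and combine the resulting lower bound with the Singleton bound. You even make explicit the point (primitivity of $\alpha$ forcing the $\alpha^{j_\ell}$ to be distinct) that the paper leaves implicit.
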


\begin{proof}
The submatrix given by any subset of $n-k$ columns (with column indices $0\leq j_1,\dots,j_{n-k}\leq {n-1}$) has determinant 
$$ \left|  
\begin{array}{cccc}
\alpha^{j_1} & \alpha^{j_2} & \dots & \alpha^{j_{n-k}}\\
\alpha^{2j_1} & \alpha^{2j_2} & \dots & \alpha^{2j_{n-k}}\\
\alpha^{3j_1} & \alpha^{3j_2} & \dots & \alpha^{3j_{n-k}}\\
\vdots & \vdots & \ddots & \vdots\\
\alpha^{(n-k)j_1} & \alpha^{(n-k)j_2} & \dots & \alpha^{(n-k)j_{n-k}}\\
\end{array}
\right|=\alpha^{j_1}\dots\alpha^{j_{n-k}}\cdot
\left|
V_n(\alpha^{j_1},\dots,\alpha^{j_{n-k}})\right|,
$$
which is not zero. So, any set of $n-k$ columns of the parity-check matrix are independent, and so the minimum distance must be at least $n-k+1$. By the Singleton bound the minimum distance must be exactly equal to $n-k+1$.
\end{proof}

\begin{example}
The minimum distance of $RS_{7,5}(2)$ is $5$ as justified in Example~\ref{e:dmin}. This equals $n-k+1=6-2+1$.
\end{example}

\subsection{Reed--Solomon codes and interpolation polynomials.}

Consider the set ${\mathbb F}_q[x]^{<k}$ of all polynomials with coefficients in ${\mathbb F}_q$ and of degree strictly less than $k$.
A general element $a\in{\mathbb F}_q[x]^{<k}$ is of the form $a=a_0+a_1x+\dots+a_{k-1}x^{k-1}$ with $a_i\in{\mathbb F_q}$. Observe that evaluating $a$ at $\alpha^{i-1}$ gives $a(\alpha^{i-1})=a_0+a_1\alpha^{i-1}+\dots+a_{k-1}\alpha^{(i-1)(k-1)},$ which is exactly the result of the product of the vector $(a_0,\dots,a_{k-1})$ by the $i$th column of matrix $G$. So, the product of the vector $(a_0,\dots,a_{k-1})$ by matrix $G$ is exactly the vector $(a(1),a(\alpha),a(\alpha^2),\dots,a(\alpha^{n-1})).$

\begin{definition}\label{def:interpolation}
  The Reed--Solomon code over ${\mathbb F}_q$ and of dimension $k$, $RS_{q,\alpha}(k)$, is the set 
$\{(a(1),a(\alpha),a(\alpha^2),\dots,a(\alpha^{n-1})): a\in{\mathbb F}_q[x]^{<k}\}.$
\end{definition}

\begin{example}\label{e:intpolGf}
  The three code words computed in Example~\ref{e:Gf}, which are  $265034$, $231546$, and $401632$ are, respectively, the evaluation of the polynomials $x+1$, $2x$, and $6x+5$ at $5^0,5^1,5^2,5^3,5^4,5^5$. The degree of the three polynomials is less than $2$ which is the dimension of the code.
\end{example}

Now, for each vector
$u=(u_0,\dots,u_{n-1})$ in ${\mathbb F}_q^n$, 
there exists a unique polynomial $f_u$ 
of degree at most $n-1$
such that $f_u(\alpha^i)=u_i$ for all $i$ in $\{0,\dots,n-1\}$.
It can be computed using the formula
$f_u=\sum_{i=0}^{n-1}u_i f_i$, where $f_i$ is the interpolation polynomial of the $i$th standard basis vector, that is, $f_i=\prod_{\substack{j=0\\j\neq i}}^{n-1}\frac{x-\alpha^j}{\alpha^i-\alpha^j}.$
The uniqueness of $f_u$ is a consequence of the fact that if $f_u=a_0+a_1x+\dots+a_{n-1}x^{n-1}$,
then the coefficients $a_0,\dots,a_{n-1}$ are a solution of 
the linear system of equations
$$
\left(\begin{array}{cccccc}
1      &    1        &    1          &    1           & \dots & 1      \\          
1      & \alpha      & \alpha^{2}     & \alpha^{3}     & \dots & \alpha^{(n-1)}\\
1      & \alpha^{2}  & \alpha^{4}      & \alpha^{6}     & \dots & \alpha^{2(n-1)}\\
1      & \alpha^{3}  & \alpha^{6}      & \alpha^{9}     & \dots & \alpha^{3(n-1)}\\
\vdots& \vdots            & \vdots               &     \vdots          &   \ddots    & \vdots\\
1      & \alpha^{n-1} & \alpha^{2(n-1)} & \alpha^{3(n-1)} & \dots & \alpha^{(n-1)(n-1)}\\
\end{array}\right)
\left(\begin{array}{c}a_0\\a_1\\\vdots\\a_{n-1}\end{array}\right)
= \left(\begin{array}{c}u_0\\u_1\\\vdots\\u_{n-1}\end{array}\right).
$$
The matrix of this system is a square Vandermonde matrix which is known to be invertible. 
So, any $u$ in ${\mathbb F}_q^n$ is 
of the form
$(f(1),f(\alpha),f(\alpha^2),\dots,f(\alpha^{n-1}))$
for some unique $f\in{\mathbb F}_q[x]$ of degree less than $n$.

\begin{example}\label{ex:fufw}
In ${\mathbb F}_7$, taking $\alpha=5$ as primitive element, we have
$$\begin{array}{rcl}
f_0&=&6x^5 + 6x^4 + 6x^3 + 6x^2 + 6x + 6,\\
f_1&=&2x^5 + 3x^4 + x^3 + 5x^2 + 4x + 6,\\
f_2&=&3x^5 + 5x^4 + 6x^3 + 3x^2 + 5x + 6,\\
f_3&=&x^5 +  6x^4 + x^3 + 6x^2 + x + 6,\\
f_4&=&5x^5 + 3x^4 + 6x^3 + 5x^2 + 3x + 6,\\
f_5&=&4x^5 + 5x^4 + x^3 + 3x^2 + 2x + 6.\\
\end{array}$$
Then, for a general vector $u\in{\mathbb F}_7^6$, the coefficients of $f_u$ (in increasing order) can be computed as the product of $u$ by the matrix
$$\left(\begin{array}{cccccc}
6& 6& 6& 6& 6& 6\\
6& 4& 5& 1& 3& 2\\
6& 5& 3& 6& 5& 3\\
6& 1& 6& 1& 6& 1\\
6& 3& 5& 6& 3& 5\\
6& 2& 3& 1& 5& 4\\
\end{array}\right).$$
For instance, the coefficients of the polynomial interpolating $u=(4, 2, 1, 6, 3, 2)$
%\break
are $(3, 0, 3, 2, 6, 4)$, and the coefficients of the polynomial interpolating
%\break
$w=(0, 2, 5, 6, 0, 6)$ are $(2, 2, 2, 2, 6,0)$.
\end{example}

\paragraph{Code word checking.} From Definition~\ref{def:interpolation}, a vector $u=(u_0,\dots,u_{n-1})$ in ${\mathbb F}_q^n$
is a code word if and only if its interpolation polynomial $f_u$ satisfies $\deg(f_u)<k$.

\begin{example}
  The words  $265034$, $231546$, and $401632$ are code words of $RS_{7,5}(2)$ because, as seen in Example~\ref{e:intpolGf},
  their interpolation polynomials are, respectively, $x+1$, $2x$, and $6x+5$, whose degrees are less than $k=2$.
  The words $421632$ and $025606$ are not code words of $RS_{7,5}(2)$ because, as seen in Example~\ref{ex:fufw}, their interpolation polynomials are, respectively, $6x^5+6x^3+5x^2+2x$ and $6x^4+2x^3+2x^2+2x+2$, whose degrees are larger than $k=2$.
\end{example}

\subsection{Reed--Solomon codes and polynomial evaluation.}

Consider now the set ${\mathbb F}_q[x]^{<n}$ of all polynomials with coefficients in ${\mathbb F}_q$ and degree strictly less than $n$.
A general element $u\in{\mathbb F}_q[x]^{<n}$ is of the form $u=u_0+u_1x+\dots+u_{n-1}x^{n-1}$ with $u_i\in{\mathbb F_q}$. Observe that evaluating $u$ at $\alpha^{i}$ gives $u(\alpha^{i})=u_0+u_1\alpha^{i}+\dots+u_{n-1}\alpha^{i(n-1)},$ which, if $i\leq n-k$, is exactly the result of the product of the $i$th row of matrix $H$ and vector $(u_0,\dots,u_{n-1})^T$.
The value $u(\alpha^i)$, if $i\leq n-k$, is called the $i$th {\it syndrome} of $u$ with respect to $C$.
Now, the product of matrix $H$ and vector $(u_0,\dots,u_{n-1})^T$ is exactly the vector $(u(\alpha),u(\alpha^2),\dots,u(\alpha^{n-k})),$ which is called the {\it syndrome vector} of $u$ with respect to $C$. On the other hand, by definition of parity-check matrix, $(u_0,\dots,u_{n-1})$ is a code word if and only if the product of matrix $H$ and $(u_0,\dots,u_{n-1})^T$ is zero.

\begin{definition}\label{def:evaluation}
The Reed--Solomon code over ${\mathbb F}_q$ and of dimension $k$, $RS_{q,\alpha}(k)$, is the set of vectors $u=(u_0,\dots,u_{n-1})$ in ${\mathbb F}_q^n$ such that the polynomial $u_0+u_1x+\dots+u_{n-1}x^{n-1}$ vanishes at $\alpha^j$ for all $j$ with $1\leq j\leq n-k$.
\end{definition}

\paragraph{Code word checking.}
Now, given a vector $u=(u_0,\dots,u_{n-1})$ in ${\mathbb F}_q^n$, 
$u$ is a code word if and only if $u(\alpha^i)=0$ for all $i$ with $1\leq i\leq n-k$.

  \begin{example}
    \label{e:toyu}
    Suppose we want to check whether the word $342650$ belongs to $RS_{7,5}(2)$.
    We consider the polynomial $u(x)=3+4x+2x^2+6x^3+5x^4$
    and evaluate it at $5$, $5^2$, $5^3$ and $5^4$. We obtain
    $$\begin{array}{rcl}
      u(5^1)=u(5)&=3+6+1+1+3&=0,\\
      u(5^2)=u(4)&=3+2+4+6+6&=0,\\
      u(5^3)=u(6)&=3+3+2+1+5&=0,\\
      u(5^4)=u(2)&=3+1+1+6+3&=0.
    \end{array}$$
Since $u(5)=u(5^2)=u(5^3)=u(5^4)=0$, the word $342650$ belongs to $RS_{7,5}(2)$. 
\end{example}

\subsection{Connection of the coefficients of an interpolation polynomial and its evaluation at all points.}

Next we will see that the coefficients of an interpolation polynomial over a finite field are intimately related to the values obtained when evaluating the polynomial at all the nonzero elements of the finite field.

\begin{lemma}
\label{l:fi}
Suppose that $\alpha$ is a primitive element of a finite field of $q$ elements and let $n=q-1$.
The polynomials $f_i=\prod_{\substack{j=0\\j\neq i}}^{n-1}\frac{x-\alpha^j}{\alpha^i-\alpha^j}$ satisfy
$f_i=-(\alpha^ix^{n-1}+\alpha^{2i}x^{n-2}+\alpha^{3i}x^{n-3}+\dots+\alpha^{(n-1)i}x+\alpha^{ni}).$
\end{lemma}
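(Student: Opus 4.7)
The plan is to exploit the fact that the nonzero elements of $\mathbb{F}_q$ are precisely the roots of $x^n-1$ (where $n=q-1$), so that $\prod_{j=0}^{n-1}(x-\alpha^j)=x^n-1$. Factoring out $(x-\alpha^i)$, the numerator of $f_i$ becomes $\prod_{j\neq i}(x-\alpha^j)=(x^n-1)/(x-\alpha^i)$. Since $(\alpha^i)^n=1$, I can write $x^n-1=x^n-(\alpha^i)^n$ and apply the standard geometric factorization
\[
\frac{x^n-(\alpha^i)^n}{x-\alpha^i}=x^{n-1}+\alpha^i x^{n-2}+\alpha^{2i}x^{n-3}+\dots+\alpha^{(n-1)i}.
\]
That handles the numerator in closed form.

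Next I would handle the denominator $D_i:=\prod_{j\neq i}(\alpha^i-\alpha^j)$. The cleanest way is to recognize $D_i$ as the value at $x=\alpha^i$ of the polynomial $\prod_{j\neq i}(x-\alpha^j)$ already computed above, namely
\[
D_i = (\alpha^i)^{n-1}+\alpha^i(\alpha^i)^{n-2}+\alpha^{2i}(\alpha^i)^{n-3}+\dots+\alpha^{(n-1)i}= n\,\alpha^{i(n-1)}.
\]
Here I use the crucial arithmetic fact that in $\mathbb{F}_q$, $n=q-1\equiv -1$ (modulo the characteristic), so $D_i=-\alpha^{i(n-1)}$. Since $\alpha^{in}=1$, this simplifies to $D_i=-\alpha^{-i}$.

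Finally, I combine:
\[
f_i = \frac{1}{D_i}\bigl(x^{n-1}+\alpha^i x^{n-2}+\dots+\alpha^{(n-1)i}\bigr)= -\alpha^i\bigl(x^{n-1}+\alpha^i x^{n-2}+\dots+\alpha^{(n-1)i}\bigr),
\]
and distributing $-\alpha^i$ through shifts the exponents by one to give exactly $-(\alpha^i x^{n-1}+\alpha^{2i}x^{n-2}+\dots+\alpha^{(n-1)i}x+\alpha^{ni})$, as claimed.

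The main obstacle (or at least the only nontrivial point) is the evaluation of the denominator: one must notice that $n=q-1=-1$ in the prime field of $\mathbb{F}_q$, otherwise the simplification $D_i=-\alpha^{-i}$ is not apparent. Everything else is the geometric-series factorization together with the identity $\prod_{j=0}^{n-1}(x-\alpha^j)=x^n-1$.
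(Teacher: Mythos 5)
Your proof is correct and rests on the same two key facts as the paper's own proof: the geometric factorization of $(x^n-1)/(x-\alpha^i)$ and the evaluation $n\beta^{n-1}=-1/\beta$ coming from $n=q-1\equiv -1$ in the prime field. The only difference is organizational --- you compute $f_i$ directly as the explicit numerator polynomial divided by the scalar denominator $D_i=-\alpha^{-i}$, whereas the paper verifies that the candidate expression $-(\alpha^i x^{n-1}+\dots+\alpha^{ni})$ takes the correct values at all $n$ nonzero points and then invokes uniqueness of the degree-$(n-1)$ interpolant; both routes are valid and essentially equivalent.
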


\begin{proof}
  Suppose $\beta\in{\mathbb F}_q\setminus\{0\}$. From the equality $(x-\beta)(x^{n-1}+\beta x^{n-2}+\beta^2 x^{n-3}+\dots+\beta^{n-2}x+\beta^{n-1})=x^n-1,$ it follows that $x^{n-1}+\beta x^{n-2}+\beta^2 x^{n-3}+\dots+\beta^{n-2}x+\beta^{n-1}=\frac{x^n-1}{x-\beta}.$ This, together with the fact  $x^n-1=\prod_{\gamma\in{\mathbb F}_q\setminus\{0\}}(x-\gamma)$, implies that  $x^{n-1}+\beta x^{n-2}+\beta^2 x^{n-3}+\dots+\beta^{n-2}x+\beta^{n-1}$ vanishes at all the elements of ${\mathbb F}_q\setminus\{0\}$ except at $\beta$, where it evaluates to $\beta^{n-1}+\beta \beta^{n-2}+\beta^2 \beta^{n-3}+\dots+\beta^{n-2}\beta+\beta^{n-1}=n\beta^{n-1}=\frac{-1}{\beta}$. Hence, $-\beta(x^{n-1}+\beta x^{n-2}+\beta^2 x^{n-3}+\dots+\beta^{n-2}x+\beta^{n-1})$ vanishes at all the elements of ${\mathbb F}_q\setminus\{0\}$ except at $\beta$, where it evaluates to $1$. Finally, $-\beta(x^{n-1}+\beta x^{n-2}+\beta^2 x^{n-3}+\dots+\beta^{n-2}x+\beta^{n-1})=-(\beta x^{n-1}+\beta^{2}x^{n-2}+\beta^{3}x^{n-3}+\beta^{4}x^{n-4}+\dots+\beta^{(n-1)}x+\beta^{n})$.

If we take $\beta=\alpha^i$ then $f_i$ and the expression
have degree $q-2$ and take the same values at $q-1$ points, hence are equal.
\end{proof}

The main result relating the last two definitions of Reed--Solomon codes is the following lemma.

\begin{lemma}
\label{l:inverse}
The inverse of the map $$\begin{array}{ccc}{\mathbb F}^n &\rightarrow& {\mathbb F}^n\\
(v_0,\dots,v_{n-1})&\mapsto &(v(\alpha^0),v(\alpha),v(\alpha^2),\dots,v(\alpha^{n-1}))
\end{array}
$$
is
$$\begin{array}{ccc}
(-u(\alpha^n),-u(\alpha^{n-1}),-u(\alpha^{n-2}),\dots,-u(\alpha)) & 
\mathrel{\reflectbox{\ensuremath{\mapsto}}}
& (u_0,\dots,u_{n-1}),\end{array}$$
where $v(\beta)$ is the evaluation of $v_0+v_1x+\dots+v_{n-1}x^{n-1}$ at $\beta$ and $u(\beta)$ is the evaluation of $u_0+u_1x+\dots+u_{n-1}x^{n-1}$ at $\beta$.
\end{lemma}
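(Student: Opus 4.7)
The plan is to show that applying the claimed inverse map to $u = (v(\alpha^0), v(\alpha), \dots, v(\alpha^{n-1}))$ indeed recovers the coefficient vector $(v_0, \dots, v_{n-1})$. The key tool is Lemma~\ref{l:fi}, which gives an explicit formula for the Lagrange basis polynomials $f_i$, combined with the observation that the Lagrange interpolation formula $f_u = \sum_{i=0}^{n-1} u_i f_i$ must reproduce the unique polynomial of degree less than $n$ whose coefficient vector is $v$ (uniqueness of $v$ follows from invertibility of the Vandermonde matrix, as discussed earlier).

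The concrete steps are as follows. First, I would write out $f_u = \sum_{i=0}^{n-1} u_i f_i$ and substitute the expression from Lemma~\ref{l:fi}, namely
\[
f_i = -\bigl(\alpha^{i}x^{n-1} + \alpha^{2i}x^{n-2} + \alpha^{3i}x^{n-3} + \dots + \alpha^{(n-1)i}x + \alpha^{ni}\bigr),
\]
so that for each $r \in \{1, 2, \dots, n\}$, the coefficient of $x^{n-r}$ in $f_u$ equals
\[
-\sum_{i=0}^{n-1} u_i\, \alpha^{ri} = -\sum_{i=0}^{n-1} u_i\, (\alpha^{r})^{i} = -u(\alpha^{r}),
\]
where on the right $u(x) = u_0 + u_1 x + \dots + u_{n-1} x^{n-1}$ is now viewed as a polynomial in $x$ with coefficients $u_i$. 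Since $f_u$ is the unique polynomial of degree less than $n$ interpolating the pairs $(\alpha^i, u_i)$, its coefficient of $x^{n-r}$ is precisely $v_{n-r}$.

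Reading off the identifications for $r = 1, 2, \dots, n$ gives $v_{n-1} = -u(\alpha)$, $v_{n-2} = -u(\alpha^2)$, $\dots$, $v_1 = -u(\alpha^{n-1})$, $v_0 = -u(\alpha^n)$, which is exactly the formula stated in the lemma (and is consistent, since $\alpha^n = 1$ means $u(\alpha^n) = u(1)$). I would close by noting that this shows the two maps compose to the identity; since both domain and codomain are finite of the same size, this suffices.

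The only real obstacle is index bookkeeping: one has to be careful that the coefficient of $x^{n-r}$ in $f_i$ picks up the exponent $ri$ (not $(r-1)i$ or similar), and that the reversed order of the $\alpha$-arguments in the inverse map matches the ``coefficient position $\leftrightarrow$ power of $x$'' correspondence. No further machinery is needed; this is essentially just a clean reindexing of Lemma~\ref{l:fi}.
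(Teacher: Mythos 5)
Your proof is correct and takes essentially the same route as the paper's: both expand $f_u=\sum_{i=0}^{n-1}u_if_i$ using the explicit formula for $f_i$ from Lemma~\ref{l:fi} and read off that the coefficient of $x^{n-r}$ is $-\sum_i u_i\alpha^{ri}=-u(\alpha^r)$. The only cosmetic difference is that the paper organizes this coefficient extraction as a sequence of vector products, while you index it by $r$ directly.
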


\begin{proof}
The inverse map is giving the coefficients of the interpolation polynomial $f_u$. By Lemma~\ref{l:fi} we have that $f_u=\sum_{i=0}^{n-1}u_i f_i$, where
$f_i=-(\alpha^ix^{n-1}+\alpha^{2i}x^{n-2}+\alpha^{3i}x^{n-3}+\dots+\alpha^{(n-1)i}x+\alpha^{ni}).$
Now,
\begin{eqnarray*}f_u&=&x^{n-1}
\left(\begin{array}{cccc} u_0& u_1& \cdots & u_{n-1}\end{array}\right)
\left(\begin{array}{c} -1\\ -\alpha\\ \vdots \\ -\alpha^{n-1}\end{array}\right)\\&&+\ \
x^{n-2}
\left(\begin{array}{cccc} u_0& u_1& \cdots & u_{n-1}\end{array}\right)
\left(\begin{array}{c} -1\\ -\alpha^2\\ \vdots\\ -\alpha^{2(n-1)}\end{array}\right)
\\&&+\ \ \cdots
\\&&+\ \ 
x
\left(\begin{array}{cccc} u_0& u_1& \cdots & u_{n-1}\end{array}\right)
\left(\begin{array}{c} -1\\ -\alpha^{n-1}\\ \vdots\\ -\alpha^{(n-1)(n-1)}\end{array}\right)\\&&+
\left(\begin{array}{cccc} u_0& u_1& \cdots & u_{n-1}\end{array}\right)
\left(\begin{array}{c} -1\\ -\alpha^{n}\\ \vdots\\ -\alpha^{n(n-1)}\end{array}\right).
\end{eqnarray*}
So, $f_u=-u(\alpha)x^{n-1}-u(\alpha^2)x^{n-2}-\dots-u(\alpha^n)=\sum_{i=0}^{n-1}(-u(\alpha^{n-i}))x^i$.
\end{proof}

  \begin{example}
    Consider the word $(u_0,u_1,u_2,u_3,u_4,u_5)=(5,4,0,1,2,0)$ and the related polynomial $u=5+4x+x^3+2x^4$.
Its evaluation at the powers of $5$ is
    $$\begin{array}{rcl}
      u(5^1)=u(5)&=5+6+6+4&=0,\\
      u(5^2)=u(4)&=5+2+1+1&=2,\\
      u(5^3)=u(6)&=5+3+6+2&=2,\\
      u(5^4)=u(2)&=5+1+1+4&=4,\\
      u(5^5)=u(3)&=5+5+6+1&=3,\\
      u(5^6)=u(1)&=5+4+1+2&=5.
    \end{array}$$
What Lemma~\ref{l:inverse} says is that the polynomial $v(x)=-5-3x-4x^2-2x^3-2x^4=2+4x+3x^2+5x^3+5x^4$ satisfies that $u=(v(1),v(5),v(5^2),v(5^3),v(5^4),v(5^5)).$ Indeed,   $$\begin{array}{rcl}
  v(5^0)=v(1)&=2+4+3+5+5&=5,\\
      v(5^1)=v(5)&=2+6+5+2+3&=4,\\
      v(5^2)=v(4)&=2+2+6+5+6&=0,\\
      v(5^3)=v(6)&=2+3+3+2+5&=1,\\
      v(5^4)=v(2)&=2+1+5+5+3&=2,\\
      v(5^5)=v(3)&=2+5+6+2+6&=0,\\
    \end{array}$$
and it follows that $(v(1),v(5),v(5^2),v(5^3),v(5^4),v(5^5))=(5,4,0,1,2,0)=(u_0,u_1,u_2,u_3,u_4,u_5)$.
  \end{example}

\section{New decoding approach.}
\label{sec:decap}

We approach decoding from the point of view of Definition~\ref{def:interpolation}. However, we use Definition~\ref{def:evaluation} for the proofs.

Let ${\mathbb F}_q[x]^{<d}$ be the set of polynomials with coefficients in ${\mathbb F}_q$ and degree strictly less than $d$, and let 
${\mathbb F}_q[x]^{<d}_{\geq d'}$ be the set of polynomials with coefficients in ${\mathbb F}_q$ and with only terms of degrees at least $d'$ and at most $d-1$.

Suppose we receive $u\in{\mathbb F}_q^n$. Let $f_u$ be the interpolation polynomial of $u$. Decoding $u$ is the same as finding $c\in RS_{q,\alpha}(k)$ such that $u$ and $c$ are at minimum Hamming distance.
Since words $c\in RS_{q,\alpha}(k)$ are the evaluation of polynomials of degrees smaller than $k$ at the nonzero elements of ${\mathbb F}_q$,
decoding $u$ is equivalent to finding $g_c\in{\mathbb F}_q[x]^{<k}$ such that $f_u-g_c$ has maximum number of nonzero roots. In fact, $g_c$ is then the interpolation polynomial of $c$.

The monomials of $f_u$ can be split into those that have degree less than $k$ and those having degree at least $k$. Let $h_u,g_u$ be the unique polynomials with $h_u\in{\mathbb F}_q[x]^{<n}_{\geq k}$, $g_u\in{\mathbb F}_q[x]^{<k}$ such that $f_u=h_u+g_u$.
Once $f_u$ is fixed, and so is $h_u$, consider, from all the polynomials in ${\mathbb F}_q[x]^{<k}$, a polynomial $g_{h_u}$ that maximizes the number of nonzero roots of $h_u+g_{h_u}$. That is,
the number of nonzero roots of $h_u+g_{h_u}$ is larger than or equal to the number of nonzero roots of $h_u+g'$ for any $g'\in{\mathbb F}_q[x]^{<k}$.
Then,
\begin{equation}
\label{eq:gs}
g_c=g_u-g_{h_u}.
\end{equation}
Notice that if $e$ is the minimum weight word such that $u-e\in RS_{q,\alpha}(k)$, then $e=u-c$ and its interpolation polynomial is $f_e=f_u-g_c=h_u+g_{h_u}$.

Consider the set
$$\Lambda=\{\lambda\in{\mathbb F}_q[x] \mbox{\,such\,that\,} \lambda(h_u+g) \mbox{\,vanishes\,at\,all\,}{\mathbb F}_q\setminus\{0\}, \mbox{ for some }g\in{\mathbb F}_q[x]^{<k}\}.$$
Because of the fact that $x^n-1=\prod_{\gamma\in{\mathbb F}_q\setminus\{0\}}(x-\gamma)$, an equivalent definition is
$$\Lambda=\{\lambda\in{\mathbb F}_q[x] \mbox{ such that } (x^n-1) \mbox{ divides } \lambda(h_u+g) \mbox{ for some }g\in{\mathbb F}_q[x]^{<k}\}.$$
Notice that $\Lambda$ is not empty because $x^n-1$ belongs to $\Lambda$.
\begin{theorem}
\label{t:lambda}
Let $\lambda_u$ be a monic polynomial with minimum degree among the polynomials in $\Lambda$.  For a polynomial $g\in{\mathbb F}_q[x]^{<k}$, if $(x^n-1)$ divides $\lambda_u(h_u+g)$, then the number of nonzero roots of $h_u+g$ is greater than or equal to the number of nonzero roots of $h_u+g'$ for any $g'\in{\mathbb F}_q[x]^{<k}$.
\end{theorem}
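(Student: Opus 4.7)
The plan is to prove two matching bounds on the number of nonzero roots of $h_u+g'$ as $g'$ ranges over $\mathbb{F}_q[x]^{<k}$: an upper bound of $n-\deg(\lambda_u)$ valid for every such $g'$, and a lower bound of $n-\deg(\lambda_u)$ for the specific $g$ appearing in the hypothesis. Since each nonzero root of $h_u+g'$ lies in the $n$-element set $\mathbb{F}_q\setminus\{0\}$, both bounds come down to counting how many of those $n$ values fail to be roots.

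For the lower bound, suppose $(x^n-1)\mid \lambda_u(h_u+g)$. Then at every $\beta\in\mathbb{F}_q\setminus\{0\}$ we have $\lambda_u(\beta)(h_u+g)(\beta)=0$, so any $\beta$ that is not a root of $h_u+g$ must be a root of $\lambda_u$. Therefore the number of nonzero non-roots of $h_u+g$ is at most the number of nonzero roots of $\lambda_u$, which is at most $\deg(\lambda_u)$. Consequently the number of nonzero roots of $h_u+g$ is at least $n-\deg(\lambda_u)$.

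For the upper bound, take any $g'\in\mathbb{F}_q[x]^{<k}$ and let $m$ denote the number of nonzero roots of $h_u+g'$. Define
\[
\lambda'=\prod_{\beta\in(\mathbb{F}_q\setminus\{0\})\setminus Z}(x-\beta),
\]
where $Z$ is the set of nonzero roots of $h_u+g'$. Then $\lambda'(\beta)(h_u+g')(\beta)=0$ for every nonzero $\beta$ (because one of the two factors vanishes), so $(x^n-1)\mid \lambda'(h_u+g')$ and hence $\lambda'\in\Lambda$. By minimality of $\deg(\lambda_u)$, we get $\deg(\lambda_u)\le \deg(\lambda')=n-m$, i.e.\ $m\le n-\deg(\lambda_u)$.

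Combining the two bounds gives that $h_u+g$ attains the maximum number $n-\deg(\lambda_u)$ of nonzero roots, which is exactly the statement of the theorem. I expect no real obstacle beyond being careful that $\lambda'$ actually belongs to $\Lambda$: the definition of $\Lambda$ only requires the existence of some $g\in\mathbb{F}_q[x]^{<k}$ making the divisibility hold, and $g'$ itself serves as that witness, so the construction is clean.
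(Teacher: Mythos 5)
Your proof is correct and follows essentially the same route as the paper's: both arguments rest on the observation that, since $x^n-1$ is squarefree, the condition $(x^n-1)\mid\lambda(h_u+g')$ forces $\lambda$ to vanish at every nonzero non-root of $h_u+g'$, so the minimal such $\lambda$ for a given $g'$ is $\prod_{(h_u+g')(\gamma)\neq 0}(x-\gamma)$, of degree $n$ minus the number of nonzero roots. The paper phrases this by describing each $\Lambda_g$ as the set of multiples of that product and taking the union over $g$, while you unpack the same counting into an explicit matched upper and lower bound of $n-\deg(\lambda_u)$; the substance is identical.
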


\begin{proof}
For a fixed $g\in{\mathbb F}_q[x]^{<k}$, the set of polynomials 
$$\Lambda_g=\{\lambda\in{\mathbb F}_q[x] \mbox{ such that } (x^n-1) \mbox{ divides } \lambda(h_u+g)\}$$
is, since $x^n-1=\prod_{\gamma\in{\mathbb F}_q\setminus\{0\}}(x-\gamma)$,
the set of polynomials that are multiples of 
\begin{equation}\label{eq:prodlambda}\prod_{\substack{\gamma\in{\mathbb F}_q\setminus\{0\}\\(h_u+g)(\gamma)\neq 0}}(x-\gamma).
\end{equation}
The monic polynomial with minimum degree among $\Lambda_g$ is then the polynomial \eqref{eq:prodlambda} itself.
Now, $\Lambda=\cup_{g\in {\mathbb F}_q[x]^{<k}}\Lambda_g$. So a monic polynomial with minimum degree among $\Lambda$ must be one of the polynomials as in \eqref{eq:prodlambda} for some $g\in{\mathbb F}_q[x]^{<k}$. The minimality of the degree of $\lambda_u$ implies the maximality of the number of nonzero roots of $h_u+g$, where $g$ is such that $\lambda_u\in \Lambda_g$.
\end{proof}

Let $\lambda_u$ be as in Theorem~\ref{t:lambda} and suppose $\mu\in{\mathbb F}_q[x]$ is such that $\lambda_u(h_u+g)=\mu(x^n-1)$ for some $g\in{\mathbb F}_q[x]^{<k}$.
Suppose that $\deg(\lambda_u)=t$ and $\deg(h_u)=d_u$. In particular, $t\leq n$ and $d_u\leq n-1$. Now, $\deg(\mu)=t+d_u-n$.

Let the coefficients of $\lambda_u(h_u+g)$ be $\xi_0,\dots,\xi_{d_u+t}$.
If $h_u=a_{d_u}x^{d_u}+a_{d_u-1}x^{d_u-1}
%\break
+\dots+a_kx^k$
and $\lambda_u=x^t+l_{t-1}x^{t-1}+\dots+l_1x+l_0,$
then, letting $a_j=0$ for all $j>d_u$ and $\xi_j=0$ for all $j>d_u+t$, we have for all $i\geq 0$,
$$
\begin{array}{ccccccccccc}
\xi_{k+t+i}&=&a_{k+t+i}l_0&+&a_{k+t+i-1}l_{1}&+&\cdots&+&a_{k+i+1}l_{t-1}&+&a_{k+i},\\
\end{array}
$$
which, by Lemma~\ref{l:inverse}, is equivalent to
\begin{equation}\label{eq:linu}
\xi_{k+t+i}=-u(\alpha^{n-k-t-i})l_0-u(\alpha^{n-k-t-i+1})l_{1}-\cdots
-u(\alpha^{n-k-i}).
\end{equation}
Since $\deg(\mu)=t+d_u-n<n$, the coefficients of $\lambda_u(h_u+g)=\mu(x^n-1)=\mu x^n-\mu$ satisfy 
$\xi_i=-\xi_n$, $\xi_1=-\xi_{n+1}$, \dots, $\xi_{t+d_u-n}=-\xi_{t+d_u}$
and
$\xi_{t+d_u-n+1}=\xi_{t+d_u-n+2}=\dots=\xi_{n-1}=0.$

\begin{lemma}
\label{l:sys}
Let $\lambda_u$ be as in Theorem~\ref{t:lambda}.
The nonleading coefficients of $\lambda_u$ give a solution to the linear system
\setlength{\arraycolsep}{2pt}
\begin{equation}\label{eq:linsysu}\left(
\begin{array}{cccc}
u(\alpha) & u(\alpha^{2}) & \dots & u(\alpha^t)\\
u(\alpha^2) & u(\alpha^3) & \dots & u(\alpha^{t+1})\\
\vdots & \vdots & \ddots & \vdots \\
u(\alpha^{n-k-t}) & u(\alpha^{n-k-t+1}) & \dots  & u(\alpha^{n-k-1}) \\
\end{array}
\right)
\left(
\begin{array}{c}
l_0\\l_1\\\vdots \\l_{t-1}
\end{array}
\right)=
\left(
\begin{array}{c}
-u(\alpha^{t+1})\\
-u(\alpha^{t+2})\\
\vdots\\
-u(\alpha^{n-k})
\end{array}
\right).\end{equation}
\end{lemma}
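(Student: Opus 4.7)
The plan is to extract the claimed linear system directly from the coefficient identity \eqref{eq:linu}, using the vanishing of the middle coefficients of $\lambda_u(h_u+g)$ that has already been established. Since $\lambda_u(h_u+g)=\mu(x^n-1)$ with $\deg\mu=t+d_u-n<n$, the product has no terms of degree strictly between $t+d_u-n$ and $n$; equivalently, $\xi_j=0$ for every $j$ with $t+d_u-n+1\le j\le n-1$.

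First I would verify that the index $j=k+t+i$ lies in that vanishing range for each $i$ with $0\le i\le n-k-t-1$. The upper bound $k+t+i\le n-1$ is immediate, and the lower bound $k+t+i\ge t+d_u-n+1$ reduces to $d_u\le n+k-1$, which follows at once from $d_u\le n-1$. Hence each of the $n-k-t$ coefficients $\xi_{k+t+i}$ vanishes.

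Next I would substitute these vanishings into \eqref{eq:linu}. For each such $i$, the equation $\xi_{k+t+i}=0$ rearranges as
\[
\sum_{j=0}^{t-1}u(\alpha^{n-k-t-i+j})\,l_j \;=\; -u(\alpha^{n-k-i}).
\]
Letting $r=n-k-t-i$, which ranges over $\{1,2,\dots,n-k-t\}$ as $i$ ranges over $\{0,1,\dots,n-k-t-1\}$, this becomes
\[
\sum_{j=0}^{t-1}u(\alpha^{r+j})\,l_j \;=\; -u(\alpha^{r+t}),
\]
which is precisely the $r$-th row of the system \eqref{eq:linsysu}. Every row of \eqref{eq:linsysu} is obtained this way, so $(l_0,\dots,l_{t-1})$ is a solution.

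The only potential stumbling point is index arithmetic: making sure the substitution $j=k+t+i$ really lands in the zero range, and that the reindexing $r=n-k-t-i$ hits each equation of \eqref{eq:linsysu} exactly once. Both are routine once $d_u\le n-1$ is invoked, so there is no substantive obstacle beyond careful bookkeeping; the real content of the lemma is already in \eqref{eq:linu} together with the observation $\deg\mu<n$.
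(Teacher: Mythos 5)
Your proposal is correct and follows essentially the same route as the paper: the paper's proof likewise derives the system from equation \eqref{eq:linu} together with the vanishing $\xi_{k+t}=\dots=\xi_{n-1}=0$, justified by the inequality $k+t\geq d_u+t-n+1$. You have merely spelled out the index bookkeeping (the range check for $k+t+i$ and the reindexing $r=n-k-t-i$) that the paper leaves implicit.
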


\begin{proof}
The lemma is a consequence of equation \eqref{eq:linu}
and the fact that $\xi_{k+t},\dots,\xi_{n-1}=0,$ since $k+t\geq d_u+t-n+1$.
\end{proof}

\begin{lemma}
\label{l:linearsystems}
Let $t$ be the weight of a minimum weight vector $e\in{\mathbb F}_q^n$ such that $u-e\in RS_{q,\alpha}(k)$ and consider the linear system 
\setlength{\arraycolsep}{3pt}
\begin{equation*}\left(
\begin{array}{cccc}
u(\alpha) & u(\alpha^{2}) & \dots & u(\alpha^{t'})\\
u(\alpha^2) & u(\alpha^3) & \dots & u(\alpha^{t'+1})\\
\vdots & \vdots & \ddots & \vdots \\
u(\alpha^{n-k-t'}) & u(\alpha^{n-k-t'+1}) & \dots  & u(\alpha^{n-k-1}) \\
\end{array}
\right)
\left(
\begin{array}{c}
l_0\\l_1\\\vdots \\l_{t'-1}
\end{array}
\right)=
\left(
\begin{array}{c}
-u(\alpha^{t'+1})\\
-u(\alpha^{t'+2})\\
\vdots\\
-u(\alpha^{n-k})
\end{array}
\right).\end{equation*}
\setlength{\arraycolsep}{5pt}

\begin{enumerate}
\item If $t\leq \frac{n-k}{2}$ and $t'=t$, then the linear system has a unique solution,
which can be found as a solution to the square system
\begin{equation*}\left(
\begin{array}{cccc}
u(\alpha) & u(\alpha^{2}) & \dots & u(\alpha^{t})\\
u(\alpha^2) & u(\alpha^3) & \dots & u(\alpha^{t+1})\\
\vdots & \vdots & \ddots & \vdots \\
u(\alpha^{t}) & u(\alpha^{t+1}) & \dots  & u(\alpha^{2t-1}) \\
\end{array}
\right)
\left(
\begin{array}{c}
l_0\\l_1\\\vdots \\l_{t-1}
\end{array}
\right)=
\left(
\begin{array}{c}
-u(\alpha^{t+1})\\
-u(\alpha^{t+2})\\
\vdots\\
-u(\alpha^{2t})
\end{array}
\right).\end{equation*}

\item If $t\leq \frac{n-k}{2}$ and $t'=t$, then the unique solution
to the previous system satisfies $l_0\neq 0$.
\item If $t\leq \frac{n-k}{2}$ and $t'<t$, then the system has no solution.
\end{enumerate}
\end{lemma}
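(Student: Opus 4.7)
\emph{Proof proposal.} My plan is to reformulate the linear system through the hidden error vector. Let $e \in {\mathbb F}_q^n$ be a minimum weight vector with $u - e \in RS_{q,\alpha}(k)$, put $I = \{i : e_i \neq 0\}$ so $|I| = t$, and observe that since $u-e$ is a code word one has
\begin{equation*}
S_j := u(\alpha^j) = e(\alpha^j) = \sum_{i \in I} e_i \alpha^{ij}, \qquad 1 \leq j \leq n-k.
\end{equation*}
Plugging this identity into the system of the lemma reveals a Vandermonde structure indexed by $I$, and all three assertions will reduce to invertibility of Vandermonde matrices on the distinct nonzero elements $\{\alpha^i : i \in I\}$.

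For part~(3) I would argue by contradiction. Suppose $(l_0,\ldots,l_{t'-1})$ solves the system with $t'<t$, and form the monic polynomial $\lambda'(x) = x^{t'} + \sum_{b=0}^{t'-1} l_b x^b$ of degree $t'$. Setting $l_{t'}=1$, the system rewrites as $\sum_{b=0}^{t'} l_b S_{a+b} = 0$ for $a = 1,\ldots,n-k-t'$. Substituting the formula for $S_{a+b}$ and interchanging the two sums gives
\begin{equation*}
\sum_{i \in I} e_i \alpha^{ia}\, \lambda'(\alpha^i) = 0, \qquad 1 \leq a \leq n-k-t'.
\end{equation*}
Because $t' < t \leq (n-k)/2$, we have $n-k-t' \geq t$, so the first $t$ of these equations form a system on the $t$ unknowns $y_i := e_i \lambda'(\alpha^i)$ whose coefficient matrix $(\alpha^{ia})_{1\leq a\leq t,\, i\in I}$ is a Vandermonde on the distinct nonzero $\alpha^i$ rescaled by a nonzero diagonal. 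Invertibility forces $y_i = 0$ for every $i \in I$, so $\lambda'(\alpha^i) = 0$ for each $i \in I$, contradicting $\deg \lambda' = t' < |I| = t$.

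For parts~(1) and~(2) I would first identify $\lambda_u = \prod_{i \in I}(x-\alpha^i)$ explicitly. This product is monic of degree $t$ and lies in $\Lambda$: choosing $g \in {\mathbb F}_q[x]^{<k}$ so that $h_u + g = f_e$, which is available from \eqref{eq:gs}, makes $(h_u+g)(\alpha^j) = e_j = 0$ for $j \notin I$ and annihilates the product at $\alpha^i$ for $i \in I$. Hence $\deg \lambda_u \leq t$; conversely, Theorem~\ref{t:lambda} would turn any $\lambda \in \Lambda$ of degree less than $t$ into a code word strictly closer to $u$ than $e$, contradicting the minimality of $t$. Lemma~\ref{l:sys} then gives a solution of the rectangular system via the nonleading coefficients of $\lambda_u$, settling existence. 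For uniqueness and the reduction to the square system I would factor the top $t \times t$ Hankel block $M_{ab} = S_{a+b-1}$ as $M = P^{\top} D Q$, where $D = \mathrm{diag}(e_i)_{i\in I}$ and $P, Q$ are Vandermonde-type matrices on $\{\alpha^i : i \in I\}$; all three factors are invertible, so $M$ is invertible and the unique solution of the square (and hence of the rectangular) system is precisely the nonleading coefficient vector of $\lambda_u$. Part~(2) is then immediate: $l_0 = \lambda_u(0) = \prod_{i \in I}(-\alpha^i) \neq 0$.

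The main obstacle, I expect, is noticing that the Hankel syndrome matrix $M$ factors through the a priori unknown error support $I$ into Vandermonde pieces. Once that factorization is in hand, the uniqueness in~(1) and the contradiction in~(3) both collapse to classical Vandermonde invertibility, and~(2) falls out by inspection.
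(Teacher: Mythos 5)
Your proposal is correct, and its engine---writing the syndromes as $u(\alpha^j)=e(\alpha^j)=\sum_{i\in I}e_i\alpha^{ij}$ and factoring the resulting Hankel blocks as Vandermonde $\times$ diagonal $\times$ Vandermonde over the error support---is exactly the paper's engine for part (1), whose existence half both you and the paper delegate to Lemma~\ref{l:sys}. Where you diverge is in parts (2) and (3). The paper proves (2) by contradiction: assuming $l_0=0$ it rearranges the columns to produce a homogeneous system whose matrix is $W\,\mathrm{diag}(\alpha^{2i_j}e_{i_j})\,W^{T}$, again invertible; and it proves (3) by padding an alleged solution of the $t'$-system with $\delta=t-t'$ leading zeros to manufacture a solution of the $t$-system that violates (1) and (2). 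You instead identify the unique solution explicitly as the nonleading coefficient vector of the error locator $\lambda_u=\prod_{i\in I}(x-\alpha^i)$ (after checking, via Theorem~\ref{t:lambda}, equation~\eqref{eq:gs}, and Lemma~\ref{l:sys}, that this product is the minimal monic element of $\Lambda$ and that its degree equals the error weight $t$), so that (2) reads off as $l_0=\prod_{i\in I}(-\alpha^i)\neq0$; and you prove (3) directly by showing that the monic polynomial $\lambda'$ built from any alleged solution would have to vanish at all $t$ distinct points $\alpha^i$, $i\in I$, despite having degree $t'<t$. Your route buys a self-contained part (3) (no dependence on parts (1)--(2)) and a one-line part (2), at the price of first having to nail down $\deg\lambda_u=t$ and the explicit form of $\lambda_u$---a step the paper leaves implicit even in its own appeal to Lemma~\ref{l:sys}, so making it explicit is arguably an improvement. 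One small bookkeeping note: in your factorization $M=P^{\top}DQ$ with $D=\mathrm{diag}(e_i)$ the extra factor $\alpha^i$ coming from $S_{a+b-1}=\sum_i e_i\alpha^{i(a-1)}\alpha^{i}\alpha^{i(b-1)}$ must be absorbed into $D$ or into one of the Vandermonde-type factors (the paper uses $\mathrm{diag}(\alpha^{i_j}e_{i_j})$); this does not affect invertibility.
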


\begin{proof}
\begin{enumerate}
\item
The existence of a solution is a consequence of Lemma~\ref{l:sys}.
For the uniqueness, we will see that the square submatrix 
$$\left(
\begin{array}{cccc}
u(\alpha) & u(\alpha^{2}) & \dots & u(\alpha^{t})\\
u(\alpha^2) & u(\alpha^3) & \dots & u(\alpha^{t+1})\\
\vdots & \vdots & \ddots & \vdots \\
u(\alpha^{t}) & u(\alpha^{t+1}) & \dots  & u(\alpha^{2t-1}) \\
\end{array}
\right)$$
has nonzero determinant.
As a consequence of Definition~\ref{def:evaluation} of $RS_{q,\alpha}(k)$ and the fact that $2t-1\leq n-k$,
\begin{equation*}
\left(\begin{array}{ccc}
u(\alpha) & \dots  & u(\alpha^{t}) \\
\vdots &  \ddots & \vdots \\
u(\alpha^{t}) & \dots & u(\alpha^{2t-1})
\end{array}
\right)
=\left(
\begin{array}{ccc}
e(\alpha) & \dots  & e(\alpha^{t}) \\
\vdots &  \ddots & \vdots \\
e(\alpha^{t}) & \dots & e(\alpha^{2t-1})
\end{array}
\right).\end{equation*}

Suppose that the nonzero positions of $e$ are $i_1,\dots,i_t$, with $0\leq i_1<i_2<\dots<i_t\leq n-1$. Then it is easy to check that, letting 
$$W=\left(\begin{array}{cccc}
1    & \dots         & \dots &          1\\
\alpha^{i_1}      & \alpha^{i_2}      & \dots & \alpha^{i_t}\\
\alpha^{2i_1}    & \alpha^{2i_2}      & \dots & \alpha^{2i_t}\\
\vdots      & \vdots        & \ddots& \vdots\\
\alpha^{(t-1)i_1} & \alpha^{(t-1)i_2} & \dots & \alpha^{(t-1)i_t}
\end{array}
\right),$$ we have

\begin{equation*}\left(
\begin{array}{ccc}
e(\alpha) & \dots  & e(\alpha^{t}) \\
\vdots &  \ddots & \vdots \\
e(\alpha^{t}) & \dots & e(\alpha^{2t-1})
\end{array}
\right)
=W
\left(\begin{array}{cccc}
\alpha^{i_1}e_{i_1}      &0      & \dots &          0\\
0      & \alpha^{i_2}e_{i_2}   & \ddots &\vdots\\
\vdots      & \ddots        & \ddots& 0\\
0 & \dots  & 0  &\alpha^{i_t}e_{i_t}
\end{array}
\right)
W^T,\end{equation*}
which clearly has nonzero determinant because $W$ is a Vandermonde matrix.

\item Suppose that
\begin{equation*}\left(
\begin{array}{cccc}
u(\alpha) & u(\alpha^{2}) & \dots & u(\alpha^{t})\\
u(\alpha^2) & u(\alpha^3) & \dots & u(\alpha^{t+1})\\
\vdots & \vdots & \ddots & \vdots \\
u(\alpha^{t}) & u(\alpha^{t+1}) & \dots  & u(\alpha^{2t-1}) \\
\end{array}
\right)
\left(
\begin{array}{c}
0\\l_1\\\vdots \\l_{t-1}
\end{array}
\right)=
\left(
\begin{array}{c}
-u(\alpha^{t+1})\\
-u(\alpha^{t+2})\\
\vdots\\
-u(\alpha^{2t})
\end{array}
\right).\end{equation*}
Then, rearranging the columns, and considering Definition~\ref{def:evaluation} of $RS_{q,\alpha}(k)$ together with the fact that $2t\leq n-k$, we obtain
\begin{equation}\label{eq:001}
\left(
\begin{array}{cccc}
e(\alpha^{2}) & \dots & e(\alpha^{t}) & e(\alpha^{t+1})\\
e(\alpha^3) & \dots & e(\alpha^{t+1}) & e(\alpha^{t+2})\\
\vdots & \ddots & \ddots & \vdots\\
e(\alpha^{t+1}) & \dots  & e(\alpha^{2t-1}) & e(\alpha^{2t})\\
\end{array}
\right)
\left(
\begin{array}{c}
l_1\\\vdots \\l_{t-1}\\ 1
\end{array}
\right)=
\left(
\begin{array}{c}
0\\
0\\
\vdots\\
0
\end{array}
\right),\end{equation}
but 
\begin{equation*}\left(
\begin{array}{ccc}
e(\alpha^{2}) & \dots  & e(\alpha^{t+1})\\
e(\alpha^3) & \dots  & e(\alpha^{t+2})\\
\vdots & \ddots & \vdots\\
e(\alpha^{t+1}) & \dots  & e(\alpha^{2t})\\
\end{array}
\right)=
W
\left(\begin{array}{ccc}
\alpha^{2i_1}e_{i_1}      &\dots      &   0\\
 \vdots   &\ddots     & \vdots\\
0 &\dots &\alpha^{2i_t}e_{i_t}
\end{array}
\right)
W^T,
\end{equation*}
which, again, has nonzero determinant. This contradicts \eqref{eq:001}.
\item
Suppose $t'<t$ and $t-t'=\delta$.
If 
\begin{equation*}\left(
\begin{array}{ccc}
u(\alpha) &  \dots & u(\alpha^{t'})\\
u(\alpha^2)  & \dots & u(\alpha^{t'+1})\\
\vdots &  \ddots & \vdots \\
u(\alpha^{n-k-t'}) & \dots  & u(\alpha^{n-k-1}) \\
\end{array}
\right)
\left(
\begin{array}{c}
\gamma_1\\
\gamma_2\\
\vdots\\
\gamma_{t'}
\end{array}
\right)=
\left(
\begin{array}{c}
-u(\alpha^{t'+1})\\
-u(\alpha^{t'+2})\\
\vdots\\
-u(\alpha^{n-k})
\end{array}
\right),\end{equation*}
then, supressing the first $\delta$ rows we obtain
\setlength{\arraycolsep}{2pt}
\begin{equation*}
\left(
\begin{array}{cccc}
u(\alpha^{\delta+1}) & u(\alpha^{\delta+2}) & \dots & u(\alpha^{\delta+t'}) \\
u(\alpha^{\delta+2}) & u(\alpha^{\delta+3}) & \dots & u(\alpha^{\delta+t'+1})\\
\vdots & \vdots &  \ddots & \vdots\\
u(\alpha^{n-k-t'}) & u(\alpha^{n-k-t'+1}) & \dots  & u(\alpha^{n-k-1}) \\
\end{array}
\right)
\left(
\begin{array}{c}
\gamma_1\\
\gamma_2\\
\vdots\\
\gamma_{t'}
\end{array}
\right)=\left(
\begin{array}{c}
-u(\alpha^{t+1})\\
-u(\alpha^{t+2})\\
\vdots\\
-u(\alpha^{n-k})
\end{array}
\right)\end{equation*}
and adding $\delta$ columns at the beginning,
\setlength{\arraycolsep}{1.5pt}
\begin{equation*}
\left(
\begin{array}{ccccc}
u(\alpha) & \dots & u(\alpha^{\delta+1}) & \dots & u(\alpha^{\delta+t'}) \\
u(\alpha^2) & \dots & u(\alpha^{\delta+2}) & \dots & u(\alpha^{\delta+t'+1})\\
\vdots & \vdots & \vdots &  \vdots & \vdots\\
u(\alpha^{n-k-t}) & \dots &u(\alpha^{n-k-t'}) & \dots  & u(\alpha^{n-k-1}) \\
\end{array}
\right)
\left(
\begin{array}{c}
0\\\vdots \\0\\
\gamma_1\\
\gamma_2\\
\vdots\\
\gamma_{t'}
\end{array}
\right)=\left(
\begin{array}{c}
-u(\alpha^{t+1})\\
-u(\alpha^{t+2})\\
\vdots\\
-u(\alpha^{n-k})
\end{array}
\right).\end{equation*}
\end{enumerate}
This contradicts the two previous points.
\end{proof}
We obtain the following decoding algorithm for a $RS_{q,\alpha}(k)$ code, where 
\mbox{$n=q-1$.}

\bigskip

{\bf Input:} $u\in{\mathbb F}_q^n$.

\begin{enumerate}
\item Let $t$ be the minimum integer such that
  \setlength{\arraycolsep}{1pt}
  \begin{equation}\label{e:ranks}
      \rm{rank}\left(
\begin{array}{ccc}
u(\alpha) & \dots  & u(\alpha^{t}) \\
u(\alpha^{2}) & \dots  & u(\alpha^{t+1}) \\
\vdots & \ddots & \vdots \\
u(\alpha^{n-k-t}) & \dots & u(\alpha^{n-k-1})
\end{array}
\right)
=
\rm{rank}
\left(
\begin{array}{ccccc}
u(\alpha)  & \dots  & u(\alpha^{t+1}) \\
u(\alpha^{2}) & \dots  & u(\alpha^{t+2}) \\
\vdots & \ddots & \vdots \\
u(\alpha^{n-k-t}) & \dots & u(\alpha^{n-k})
\end{array}
\right).
  \end{equation}
\setlength{\arraycolsep}{5pt}
For $t=0$, the first matrix is the null matrix. In this case we consider $\rank()=0$.

\item Solve the linear system
\begin{equation*}\left(
\begin{array}{cccc}
u(\alpha) & u(\alpha^{2}) & \dots & u(\alpha^{t})\\
u(\alpha^2) & u(\alpha^3) & \dots & u(\alpha^{t+1})\\
\vdots & \vdots & \ddots & \vdots \\
u(\alpha^{t}) & u(\alpha^{t+1}) & \dots  & u(\alpha^{2t-1}) \\
\end{array}
\right)
\left(
\begin{array}{c}
l_0\\l_1\\\vdots \\l_{t-1}
\end{array}
\right)=
\left(
\begin{array}{c}
-u(\alpha^{t+1})\\
-u(\alpha^{t+2})\\
\vdots\\
-u(\alpha^{2t})
\end{array}
\right)\end{equation*}
for $l_0\ldots,l_{t-1}$ and denote by $\lambda_u$ the polynomial $x^t+l_{t-1}x^{t-1}+\dots+l_1x+l_0$.
\item
Obtain as in Lemma~\ref{l:inverse} the interpolation polynomial $f_u$ of $u$, and let $d_u$ be its degree.
\item 
Let $\zeta_{0},\dots,\zeta_{d_u+t}$ be the coefficients of $\lambda_uf_u$; that is, $\lambda_uf_u=\zeta_0+\zeta_1x+\dots+\zeta_{d_u+t} x^{d_u+t}.$

Let $g_c=f_u-\frac{(x^n-1)(\zeta_n+\zeta_{n+1}x+\dots+\zeta_{d_u+t}x^{d_u+t-n})}{\lambda_u}$.

\item {\bf Output:} $(g_c(1),g_c(\alpha),g_c(\alpha^2),\dots,g_c(\alpha^{n-1}))$.
\end{enumerate}

\begin{theorem}
Suppose we received $u\in{\mathbb F}_q^n$.
Let $t$ be the weight of a minimum weight vector $e\in{\mathbb F}_q^n$ such that $u-e\in RS_{q,\alpha}(k)$. If $t\leq \frac{n-k}{2}$, then the previous algorithm outputs $u-e$.
\end{theorem}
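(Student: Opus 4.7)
The plan is to show that the algorithm computes the polynomial $\lambda_u$ of Theorem~\ref{t:lambda} and then recovers $u-e$ via equation~\eqref{eq:gs}. Throughout, let $t$ denote the weight of $e$ (as in the theorem statement) and let $t_{\mathrm{alg}}$ denote the integer chosen in step~1 of the algorithm; a first subgoal is $t_{\mathrm{alg}}=t$.

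First I would establish $t_{\mathrm{alg}}=t$. The rank equality in \eqref{e:ranks} is exactly the Rouché--Frobenius condition that the rectangular linear system of Lemma~\ref{l:linearsystems} with parameter $t'$ be consistent. Part~(3) of that lemma forbids any $t' < t$, while part~(1) gives consistency at $t'=t$, so the minimum such integer is $t_{\mathrm{alg}}=t$.

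Next I would show that the polynomial produced in step~2 is the $\lambda_u$ of Theorem~\ref{t:lambda}. Observe that $f_e = h_u + g_{h_u}$ has exactly $n-t$ nonzero roots in $\mathbb{F}_q$ (corresponding to the zero positions of $e$), so the polynomial in \eqref{eq:prodlambda} associated to $g = g_{h_u}$ has degree $t$. Hence the monic minimum-degree element $\lambda_u$ of $\Lambda$ has degree $t$. By Lemma~\ref{l:sys}, its nonleading coefficients satisfy the system \eqref{eq:linsysu} with parameter $t$, and Lemma~\ref{l:linearsystems}(1) guarantees that the square subsystem solved in step~2 has a unique solution, which must therefore be $(l_0,\dots,l_{t-1})$.

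Finally I would verify that step~4 produces $g_c = g_u - g_{h_u}$ from \eqref{eq:gs}. Because $t\leq(n-k)/2$ and $\deg g_u, \deg g_{h_u} < k$, both $\deg(\lambda_u g_u)$ and $\deg(\lambda_u g_{h_u})$ are strictly less than $n$. So the coefficients of $\lambda_u f_u = \lambda_u h_u + \lambda_u g_u$ at degrees $\geq n$ coincide with those of $\lambda_u h_u$, and the defining identity $\lambda_u(h_u + g_{h_u}) = \mu(x^n-1)$ shows these are exactly the coefficients of $\mu$. Hence the quotient $(x^n-1)(\zeta_n + \cdots + \zeta_{d_u+t} x^{d_u+t-n})/\lambda_u$ equals $h_u + g_{h_u}$, so step~4 yields $g_c = f_u - (h_u + g_{h_u}) = g_u - g_{h_u} \in \mathbb{F}_q[x]^{<k}$, whose evaluation at $1,\alpha,\dots,\alpha^{n-1}$ is the code word $u-e$. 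The main obstacle is keeping three integers aligned -- the algorithmic $t_{\mathrm{alg}}$, the weight $t$ of $e$, and the degree of the theoretical $\lambda_u$ -- but once the rank condition is interpreted as consistency of the system, parts~(1) and~(3) of Lemma~\ref{l:linearsystems} force all three to agree.
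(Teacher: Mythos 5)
Your proposal is correct and follows essentially the same route as the paper's own proof: step~1 is justified via parts (1) and (3) of Lemma~\ref{l:linearsystems}, step~2 via Lemma~\ref{l:sys} together with the uniqueness from Lemma~\ref{l:linearsystems}(1), and step~4 via the degree bound $\deg(\lambda_u g)<t+k\leq\frac{n+k}{2}\leq n$, which identifies $\mu$ with the high-order coefficients $\zeta_n,\dots,\zeta_{d_u+t}$ of $\lambda_u f_u$ and recovers $g_c$ through equation~\eqref{eq:gs}. Your explicit remark that $f_e=h_u+g_{h_u}$ has exactly $n-t$ nonzero roots, so that $\deg\lambda_u=t$ and the three integers (algorithmic $t$, weight of $e$, degree of $\lambda_u$) coincide, is a point the paper leaves implicit, and it is a welcome clarification rather than a deviation.
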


\begin{proof}
By Lemma~\ref{l:linearsystems}, step 1 gives the actual number of errors $t$.
By Lemma~\ref{l:linearsystems} again, the system in step 2 has a unique solution and, by Lemma~\ref{l:sys}, the polyomial one obtains is exactly the polynomial $\lambda_u$ in Theorem~\ref{t:lambda}.
After step 3 we get the interpolation polynomial $f_u$.
Let $h_u,g_u$ be the unique polynomials with $h_u\in{\mathbb F}_q[x]^{<n}_{\geq k}$, $g_u\in{\mathbb F}_q[x]^{<k}$ such that $f_u=h_u+g_u$.
By Theorem~\ref{t:lambda}, $x^n-1$ divides $\lambda_u(h_u+g_{h_u})$ for some $g_{h_u}$ maximizing the number of nonzero roots of $h_u+{\mathbb F}_q[x]^{<k}$. In particular, there exists $\mu$ such that 
\begin{equation}
\label{eq:mu}
(x^n-1)\mu=\lambda_u(h_u+g_{h_u})
\end{equation} 
and $\mu$ must have degree less than $n$. Hence, the degrees of the monomials in $x^n\mu$ and those in $-\mu$ do not overlap. On the other hand, the monomials of $\lambda_u g_{h_u}$ have degree less than $t+k\leq \frac{n-k}{2}+k=\frac{n+k}{2}\leq n$. So, the monomials in $x^n\mu$ of degrees at least $n$ coincide with the monomials in $\lambda_u h_u$ of degrees at least $n$. That is, $x^n\mu=\zeta_nx^n+\zeta_{n+1}x^{n+1}+\dots+\zeta_{d_u+t}x^{d_u+t}$, and we deduce that $\mu=\zeta_n+\zeta_{n+1}x+\dots+\zeta_{d_u+t}x^{d_u+t-n}$. Now, from \eqref{eq:mu}, we deduce that 
$g_{h_u}=\frac{(x^n-1)(\zeta_n+\zeta_{n+1}x+\dots+\zeta_{d_u+t}x^{d_u+t-n})}{\lambda_u}-h_u$.
Now, as explained in equation \eqref{eq:gs}, the polynomial $g_c$ interpolating the code word $c\in RS_{q,\alpha}(k)$ at minimum distance of $u$ is
$g_c=g_u-g_{h_u}=f_u-\frac{(x^n-1)(\zeta_n+\zeta_{n+1}x+\dots+\zeta_{d_u+t}x^{d_u+t-n})}{\lambda_u}$.
From here it follows that the output is, indeed, the code word $c\in RS_{q,\alpha}(k)$ at minimum distance of $u$.
\end{proof}

\begin{remark}
  Steps 3 and 4 of the algorithm can be replaced by the equivalent steps in the Peterson--Gorenstein--Zierler algorithm, that is, we can find the error positions by means of the roots of $\lambda_u$ and then obtain the error values by means of the linear system
  $$\left(
\begin{array}{cccc}

\alpha^{i_1} & \alpha^{i_2} & \dots & \alpha^{i_t}\\
\alpha^{2i_1} & \alpha^{2i_2} & \dots & \alpha^{2i_t}\\
\vdots & \vdots & \ddots & \vdots \\
\alpha^{ti_1} & \alpha^{ti_2} & \dots & \alpha^{ti_t}\\
\end{array}
\right)
\left(
\begin{array}{c}
e_{i_1}\\e_{i_2}\\\vdots \\e_{i_t}
\end{array}
\right)=
\left(
\begin{array}{c}
u(\alpha)\\
u(\alpha^2)\\
\vdots\\
u(\alpha^t)
\end{array}
\right).$$
\end{remark}

\begin{example}
  Consider the same code as in Example~\ref{e:Gf}, that is, the code $C=RS_{7,5}(2)$. Suppose that after transmission of three code words we receive $u=421632$, $v=342650$, $w=025606$.

  Denote by the same symbol $u$ the vector $421632$ and the polynomial $4+2x+x^2+6x^3+3x^4+2x^5$. The syndromes of $u$ are
    $$\left(\begin{array}{c}
      u(a)\\u(a^2)\\u(a^3)\\u(a^4)
    \end{array}\right)=
    \left(
  \begin{array}{cccccc}
    1 & 5 & 4 & 6 & 2 & 3 \\
    1 & 4 & 2 & 1 & 4 & 2 \\
    1 & 6 & 1 & 6 & 1 & 6 \\
    1 & 2 & 4 & 1 & 2 & 4 \\
  \end{array}
  \right)
\left(\begin{array}{c}
      4\\2\\1\\6\\3\\2
\end{array}\right)=
\left(\begin{array}{c}
      3\\1\\5\\4
    \end{array}\right).
  $$
Since the syndromes are nonzero we deduce that there is at least one error.
We have $\rank\left(\begin{array}{c}\\ \\ \\\end{array}\right)\neq \rank\left(\begin{array}{c}3\\1\\5\\4\end{array}\right)$, but $\rank\left(\begin{array}{c}3\\1\\5\end{array}\right)= \rank\left(\begin{array}{cc}3&1\\1&5\\5&4\end{array}\right)$. So $t=1$ and there is only one error.
 We solve the system 
 $3 l_0=-1$, whose solution is $l_0=2$.
 We deduce that the error locator polynomial is
 $\lambda=x+2$.
 We compute $f_u$ as in Example~\ref{ex:fufw}, obtaining $f_u=4x^5 + 6x^4 + 2x^3 + 3x^2 + 3$. Now,
 since $f_u\cdot \lambda=4x^6 + 6x^2 + 3x + 6$, we deduce that 
$\zeta_n+\dots+\zeta_{d_u+t}x^{d_u+t-n}=4$
 and $g_c=6x+5$, so that the corrected word is
$(401632)$.
 We could also have 
 found the single root of $\lambda$, which is
 $5=5^1$, and then deduce that there is an error at the second position (first position, if we start counting by $0$).
 Then, to find the error value we could have solved the system
 $5^1 e_1=u(a)=3$, whose solution is
 $e_1=2$.  The corrected word is then $u-(020000)=(421632)-(020000)=(401632)$.

  Denote by the same symbol $v$ the vector
  $342650$
  and the polynomial
  $3+4x+2x^2+6x^3+5x^4$.
  The syndromes of $v$ are
    $$\left(\begin{array}{c}
      v(a)\\v(a^2)\\v(a^3)\\v(a^4)
    \end{array}\right)=
    \left(
  \begin{array}{cccccc}
    1 & 5 & 4 & 6 & 2 & 3 \\
    1 & 4 & 2 & 1 & 4 & 2 \\
    1 & 6 & 1 & 6 & 1 & 6 \\
    1 & 2 & 4 & 1 & 2 & 4 \\
  \end{array}
  \right)
\left(\begin{array}{c}
      3\\4\\2\\6\\5\\0
\end{array}\right)=
\left(\begin{array}{c}
      0\\0\\0\\0
    \end{array}\right).
  $$
Since the syndromes are all zero we deduce that there is no error.

 Denote by the same symbol $w$ the vector
$025606$ 
  and the polynomial
  $2x+5x^2+6x^3+6x^5$.
  The syndromes of $w$ are
      $$\left(\begin{array}{c}
      w(a)\\w(a^2)\\w(a^3)\\w(a^4)
    \end{array}\right)=
    \left(\begin{array}{cccccc}
    1 & 5 & 4 & 6 & 2 & 3 \\
    1 & 4 & 2 & 1 & 4 & 2 \\
    1 & 6 & 1 & 6 & 1 & 6 \\
    1 & 2 & 4 & 1 & 2 & 4 \\
    \end{array}\right)
    \left(\begin{array}{c}
      0\\2\\5\\6\\0\\6
    \end{array}\right)=
    \left(\begin{array}{c}
      0\\1\\5\\5
    \end{array}\right).$$
Since the syndromes are nonzero we deduce that there is at least one error.
We have $\rank\left(\begin{array}{c} \\ \\ \\ \end{array}\right)\neq \rank\left(\begin{array}{c}0\\1\\5\\5\end{array}\right)$, $\rank\left(\begin{array}{c}0\\1\\5\end{array}\right)\neq \rank\left(\begin{array}{cc}0&1\\1&5\\5&5\end{array}\right)$, while
%\break
$\rank\left(\begin{array}{cc}0&1\\1&5\end{array}\right)= \rank\left(\begin{array}{ccc}0&1&5\\1&5&5\end{array}\right)=2$. So, $t=2$.
We solve the system
$$\left(\begin{array}{cc}0 & 1\\ 1 & 5\end{array}\right)\left(\begin{array}{c}l_0\\l_1\end{array}\right)=\left(\begin{array}{c}-5\\-5\end{array}\right)=\left(\begin{array}{c}2\\2\end{array}\right),$$
        whose solution is $l_0=6$, $l_1=2$.
We deduce that the error locator polynomial is         
$\lambda=x^2+2x+6.$
We compute $f_w$ as in Example~\ref{ex:fufw}, obtaining $f_w=6x^4 + 2x^3 + 2x^2 + 2x + 2$. Now, since $f_w\cdot\lambda=6x^6 + 4x^3 + 4x^2 + 2x + 5$, we deduce that $\zeta_n+\dots+\zeta_{d_u+t}x^{d_u+t-n}=6$ and $g_c=4x+3$, so that the corrected word is
 $(025641)$.
We could also have found the roots of $\lambda$, which are $2=5^4$ and $3=5^5$ and then deduce that
the error positions are the fifth and sixth positions (fourth and fifth positions, if we start counting by $0$).
 Then, to find the error value we could have solved the system
 $$\left(\begin{array}{cc}5^4 & 5^5\\ 5^2 & 5^4 \end{array}\right)\left(\begin{array}{c}e_4\\e_5\end{array}\right)=\left(\begin{array}{cc}2 & 3\\ 4 & 2 \end{array}\right)\left(\begin{array}{c}e_4\\e_5\end{array}\right)=\left(\begin{array}{c}w(a)\\w(a^2)\end{array}\right)=\left(\begin{array}{c}0\\1\end{array}\right),$$
whose solution is $e_4=3$, $e_5=5$.
The corrected word is  then $w-(000035)=(025606)-(000035)=(025641)$.
  
\end{example}

\section{A glimpse of the Peterson--Gorenstein--Zierler algorithm.}

Peterson \cite{Peterson} and Gorenstein and Zierler \cite{GorensteinZierler} proposed a decoding algorithm which is very similar to the one we just presented.
It is based on the following lemma.

For all $h$ with $t\leq h\leq \frac{n-k}{2}$, define
$$A_h=
\left(
\begin{array}{cccc}
u(\alpha) & u(\alpha^2) & \ldots & u(\alpha^h) \\
u(\alpha^2) & u(\alpha^3) & \ldots & u(\alpha^{h+1}) \\
\vdots & \vdots & \ddots & \vdots \\
u(\alpha^h) & u(\alpha^{h+1}) & \ldots & u(\alpha^{2h-1}) \\
\end{array}
\right).$$

\begin{lemma}
If $t<h\leq \frac{n-k}{2}$, then $\det(A_h)=0$, while
$\det(A_t)\not = 0.$
That is, the number of errors (if it is at most $\frac{n-k}{2}$)
is the maximum integer $h\leq \frac{n-k}{2}$ such that
$\det(A_h)\neq 0$.
\end{lemma}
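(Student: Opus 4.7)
The plan is to use the same factorization that appeared in the proof of Lemma~\ref{l:linearsystems}, applied uniformly to every $h$ in the range $t \leq h \leq \frac{n-k}{2}$. Since $2h-1 \leq n-k-1$ and every coordinate $u(\alpha^j)$ with $1 \leq j \leq n-k$ equals $e(\alpha^j)$ (because $u - e \in RS_{q,\alpha}(k)$, so by Definition~\ref{def:evaluation} it vanishes at $\alpha,\ldots,\alpha^{n-k}$), the entries of $A_h$ depend only on the error vector $e$, not on the unknown codeword.

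First, I would write the $(r,s)$-entry of $A_h$ as $e(\alpha^{r+s-1}) = \sum_{j=1}^{t} e_{i_j}\alpha^{(r+s-1)i_j}$, where $i_1 < \dots < i_t$ are the error positions. Splitting the exponent $r+s-1 = (r-1) + i_j/i_j\cdot 1 + (s-1)$ and factoring off one $\alpha^{i_j}$, this rearranges into a matrix product
\begin{equation*}
A_h = V_h \, D \, V_h^T, \quad \text{where } (V_h)_{r,j} = \alpha^{(r-1)i_j}, \quad D = \mathrm{diag}(\alpha^{i_1}e_{i_1},\ldots,\alpha^{i_t}e_{i_t}).
\end{equation*}
This is the $h\times t$ analogue of the factorization displayed in the proof of Lemma~\ref{l:linearsystems}, and the manipulation is routine once one checks the exponents.

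For the case $h=t$, the matrix $V_t$ is a square Vandermonde matrix whose columns are evaluated at the distinct nonzero elements $\alpha^{i_1},\ldots,\alpha^{i_t}$, so $\det(V_t)\neq 0$; meanwhile $D$ is diagonal with entries $\alpha^{i_j}e_{i_j}\neq 0$ (since each $e_{i_j}\neq 0$ by definition of the error support). Hence $\det(A_t) = \det(V_t)^2 \det(D) \neq 0$. For the case $t < h \leq \frac{n-k}{2}$, the matrix $V_h$ is $h \times t$ with $h > t$, so $\mathrm{rank}(V_h) \leq t < h$, which forces $\mathrm{rank}(A_h) \leq t < h$; thus $A_h$ is singular and $\det(A_h)=0$.

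The only mildly delicate step is verifying the factorization itself, i.e.\ that the indices $r+s-1$ indeed cover exactly the range $1,\ldots,2h-1$ where $u$ and $e$ agree, and that no off-by-one error creeps in when separating $\alpha^{(r-1)i_j}$, $\alpha^{i_j}$, and $\alpha^{(s-1)i_j}$. Once this bookkeeping is written out cleanly, both conclusions follow immediately from the rank bound on a product of matrices and from the standard nonvanishing of a Vandermonde determinant at distinct nonzero points.
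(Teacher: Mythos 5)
Your proof is correct, and it rests on the same Vandermonde--diagonal--Vandermonde factorization of the syndrome matrix that the paper uses (and that already appears in the proof of Lemma~\ref{l:linearsystems}). The one place where you genuinely diverge is the singular case $t<h\leq\frac{n-k}{2}$. The paper keeps the factorization square: it pads the error support to an arbitrary set $M=\{m_1,\dots,m_h\}$ of size $h$ containing all the error positions, writes $A_h=W\,\mathrm{diag}(\alpha^{m_1}e_{m_1},\dots,\alpha^{m_h}e_{m_h})\,W^T$ with $W$ an $h\times h$ Vandermonde matrix, and concludes $\det(A_h)=0$ because some $e_{m_s}=0$ makes the diagonal factor singular. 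You instead keep the factorization rectangular, $A_h=V_hDV_h^T$ with $V_h$ of size $h\times t$ and $D$ the $t\times t$ diagonal matrix over the true support, and conclude via the rank bound $\mathrm{rank}(A_h)\leq\mathrm{rank}(V_h)\leq t<h$. The two arguments are equivalent in substance; yours avoids the (harmless but slightly artificial) choice of a padding set $M$ and makes it transparent that the rank of $A_h$ is governed exactly by $t$, while the paper's version has the advantage of reusing verbatim the square factorization already displayed in the proof of Lemma~\ref{l:linearsystems}. Your treatment of $\det(A_t)\neq 0$ is identical to the paper's: a square Vandermonde matrix at the distinct nonzero points $\alpha^{i_1},\dots,\alpha^{i_t}$ sandwiching a nonsingular diagonal matrix.
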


\begin{proof}
Since $2h-1\leq n-k$, 
$$A_h=
\left(
\begin{array}{cccc}
e(\alpha) & e(\alpha^2) & \ldots & e(\alpha^h) \\
e(\alpha^2) & e(\alpha^3) & \ldots & e(\alpha^{h+1}) \\
\vdots & \vdots & \ddots & \vdots \\
e(\alpha^h) & e(\alpha^{h+1}) & \ldots & e(\alpha^{2h-1}) \\
\end{array}
\right).$$
As before, let us denote the error positions as $i_1,\dots,i_t$ and let $M=\{m_1,...,m_h\}\subseteq\{0,...,n-1\}$
be any subset containing all the error positions.
Let $D$ be the diagonal matrix 
$$D=diag(e_{m_1},\dots,e_{m_h}).$$ Clearly, $\left|D\right|\neq0$ if $h=t$
and $\left|D\right|=0$ if $h>t$.
Let
$$W=\left(\begin{array}{cccc}
1    & 1         & \dots &          1\\
\alpha^{m_1}      & \alpha^{m_2}      & \dots & \alpha^{m_h}\\
\alpha^{2m_1}    & \alpha^{2m_2}      & \dots & \alpha^{2m_h}\\
\vdots      & \vdots        & \ddots& \vdots\\
\alpha^{(h-1)m_1} & \alpha^{(h-1)m_2} & \dots & \alpha^{(h-1)m_h}
\end{array}
\right).$$ 
Since $W$ is a Vandermonde matrix and the indices in $M$ are all different,
$\left|W\right|\neq 0$. We have
\begin{equation*}\left(
\begin{array}{ccc}
e(\alpha) & \dots  & e(\alpha^{h}) \\
\vdots &  \ddots & \vdots \\
e(\alpha^{h}) & \dots & e(\alpha^{2h-1})
\end{array}
\right)
=W
\left(\begin{array}{ccc}
\alpha^{m_1}e_{m_1}            & \dots &          0\\
\vdots            & \ddots& \vdots\\
 0  & \dots  &\alpha^{m_h}e_{m_h}
\end{array}
\right)
W^T.
\end{equation*}
Now it is straightforward to check that this product of matrices has zero determinant if and only if $M$ contains no error positions, that is, if $e_{s}=0$ for some $s\in M$.
\end{proof}

The Peterson--Gorenstein--Zierler algorithm is as follows.

\medskip

\noindent{\bf Input:} $u\in{\mathbb F}_q^n$.

\begin{enumerate}
\item Let $t$ be the maximum integer smaller than or equal to $\frac{n-k}{2}$ such that 
$$\left|\begin{array}{cccc}
u(\alpha) & u(\alpha^2) & \ldots & u(\alpha^h) \\
u(\alpha^2) & u(\alpha^3) & \ldots & u(\alpha^{h+1}) \\
\vdots & \vdots & \ddots & \vdots \\
u(\alpha^h) & u(\alpha^{h+1}) & \ldots & u(\alpha^{2h-1}) \\
\end{array}\right|\neq 0.$$

\item Solve the linear system 
\begin{equation*}\left(
\begin{array}{cccc}
u(\alpha) & u(\alpha^{2}) & \dots & u(\alpha^{t})\\
u(\alpha^2) & u(\alpha^3) & \dots & u(\alpha^{t+1})\\
\vdots & \vdots & \ddots & \vdots \\
u(\alpha^{t}) & u(\alpha^{t+1}) & \dots  & u(\alpha^{2t-1}) \\
\end{array}
\right)
\left(
\begin{array}{c}
l_0\\l_1\\\vdots \\l_{t-1}
\end{array}
\right)=
\left(
\begin{array}{c}
-u(\alpha^{t+1})\\
-u(\alpha^{t+2})\\
\vdots\\
-u(\alpha^{2t})
\end{array}
\right)\end{equation*}
for $l_0,\ldots,l_{t-1}$ and denote by $\lambda_u$ the polynomial $x^t+l_{t-1}x^{t-1}+\dots+l_1x+l_0$.
\item Find the error positions by means of the roots of $\lambda_u$.
\item Find the error values by means of the linear system
$$\left(
\begin{array}{cccc}

\alpha^{i_1} & \alpha^{i_2} & \dots & \alpha^{i_t}\\
\alpha^{2i_1} & \alpha^{2i_2} & \dots & \alpha^{2i_t}\\
\vdots & \vdots & \ddots & \vdots \\
\alpha^{ti_1} & \alpha^{ti_2} & \dots & \alpha^{ti_t}\\
\end{array}
\right)
\left(
\begin{array}{c}
e_{i_1}\\e_{i_2}\\\vdots \\e_{i_t}
\end{array}
\right)=
\left(
\begin{array}{c}
u(\alpha)\\
u(\alpha^2)\\
\vdots\\
u(\alpha^t)
\end{array}
\right).$$
\item {\bf Output:} $c=u-e$.
\end{enumerate}

\section{Comparison of our algorithm with the Peterson--Go\-ren\-stein--Zierler algorithm.}
\label{sec:comp}

The main differences between our proposed algorithm and the Peterson--Gorenstein--Zierler algorithm are
\begin{itemize}
\item Computation of the number of errors (step 1 in both algorithms);
\item Computation of the error values (steps 3--4 in both algorithms).
\end{itemize}
Error location (step 2) is done exactly in the same way.

As for the computation of the error values, step 4 in our algorithm needs two polynomial multiplications and one division (all of them of order $t+n$), while steps 3 and 4 in the Peterson--Gornestein--Zierler algorithm involve two linear square systems of $t$ equations. This already makes our algorithm simpler than the Peterson--Gorenstein--Zierler algorithm.

But the main difference is in the computation of the number of errors. In the Peterson--Gorenstein--Zierler algorithm we start computing the determinant of a (large) $\frac{n-k}{2}\times\frac{n-k}{2}$ matrix and continue computing determinants of decreasing order, while in our algorithm we start computing the rank of a (small) $2\times (n-k-1)$ matrix and continue computing ranks of $(h+1)\times (n-k-h)$ matrices with an increasing value of $h$.
In the Peterson--Gorenstein--Zierler algorithm, the smaller the number of errors, the more determinant computations will be needed.
In our algorithm, the smaller the number of errors, the fewer rank computations will be needed. Furthermore, in the Peterson--Gorenstein--Zierler algorithm we start with the most complex determinants and then they get simpler, while in our algorithm we start with the simpler rank computations and, as the number of errors increases we get more complex rank computations.

\section{The optimistic view of best case decoding.}
\label{sec:opt}

There are many scenarios where a high reliability is required but errors rarely occur. In this case, error correcting codes with a high error correction capability are required although the expected number of errors is low.
In terms of error correction, the decoding approach that takes this perspective into consideration is called {\it best case decoding}. See \cite{Berlekamp1996} for a deep analysis. In Berlekamp's clarifying words,
\begin{quote} \dots a best case decoder is philosophically analogous to a small child who continually asks, ``Are we almost there now?'' This question may occur at many places in a long decoding program. But, in a high-reliability application, the odds are quite favorable that any time the question is asked, the answer
is likely to be ``YES''.\end{quote}

Our algorithm is very well suited for best case decoding because, in contrast to the Peterson--Gorenstein--Zierler algorithm, it is fast when the number of errors is small and it is only a bit slower when the number of errors approaches the correction capability.

\section*{Acknowledgments.}
The author would like to thank the anonymous referees for deeply reading the manuscript and for making very useful comments. She would specially like to thank the editor Susan Jane Colley for her careful reading.

%\bibliographystyle{monthly}
%\bibliography{../pgzrevisitedbib,../keyeqnbib,../stdrefs,../afterrevrefs}

\providecommand{\bysame}{\leavevmode\hbox to3em{\hrulefill}\thinspace}
\providecommand{\MR}{\relax\ifhmode\unskip\space\fi MR }
% \MRhref is called by the amsart/book/proc definition of \MR.
\providecommand{\MRhref}[2]{%
  \href{http://www.ams.org/mathscinet-getitem?mr=#1}{#2}
}
\providecommand{\href}[2]{#2}

\end{document}